\numberwithin{equation}{section}
\theoremstyle{definition}
\newtheorem{theorem}{Theorem}[equation]
\newtheorem{lem}[equation]{Lemma}
\newtheorem{proposition}[equation]{Proposition}
\newtheorem{definition}[equation]{Definition}
\newtheorem{example}[equation]{Example}
\DeclareRobustCommand\widecheck[1]{{\mathpalette\@widecheck{#1}}}
\def\@widecheck#1#2{%
    \setbox\z@\hbox{\m@th$#1#2$}%
    \setbox\tw@\hbox{\m@th$#1%
       \widehat{%
          \vrule\@width\z@\@height\ht\z@
          \vrule\@height\z@\@width\wd\z@}$}%
    \dp\tw@-\ht\z@
    \@tempdima\ht\z@ \advance\@tempdima2\ht\tw@ \divide\@tempdima\thr@@
    \setbox\tw@\hbox{%
       \raise\@tempdima\hbox{\scalebox{1}[-1]{\lower\@tempdima\box
\tw@}}}%
    {\ooalign{\box\tw@ \cr \box\z@}}}
\newcommand{\cG}{\mathcal{G}}
\newcommand{\cS}{\mathcal{S}}
\newcommand{\fg}{\mathfrak{g}}
\newcommand{\fgs}{\mathfrak{g}^\ast}
\newcommand{\dd}{\mathrm{d}} 
\newcommand{\sfs}{\mathsf{s}} 
\newcommand{\sft}{\mathsf{t}} 
\newcommand{\sfi}{\mathsf{i}} 
\newcommand{\sfm}{\mathsf{m}} 
\newcommand{\sfa}{\mathsf{a}}
\newcommand{\sfA}{\mathsf{A}}
\newcommand{\R}{\mathbb{R}}
\newcommand{\del}{\partial}
\newcommand{\st}{\rightrightarrows}
\newcommand{\be}{\begin{equation}}
\newcommand{\ee}{\end{equation}}
\newcommand{\bes}{\begin{equation*}}
\newcommand{\ees}{\end{equation*}}
\newcommand{\bea}{\begin{eqnarray}}
\newcommand{\eea}{\end{eqnarray}}
\newcommand{\eqn}{\eqref}
\title{{\bf The electromagnetic field in Poisson gauge theory: the groupoidal approach}}
\author{Fabio Di Cosmo$^{1,5}$ \href{https://orcid.org/0000-0003-0256-5913}{\includegraphics[scale=0.7]{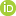}}, Vladislav G. Kupriyanov$^{2,6}$
\href{https://orcid.org/0000-0003-0105-8418}{\includegraphics[scale=0.7]{ORCID.png}} and Patrizia Vitale$^{3,4,7}$
\href{https://orcid.org/0000-0002-5146-410X}{\includegraphics[scale=0.7]{ORCID.png}}}
\begin{document}

\maketitle 
\vspace{-0.5cm}
\noindent
{\footnotesize $^{1}$ Universidad de Alcalá, 
 Departamento de Física y Matemáticas, Ctra Madrid-Barcelona, km.33, 600, 28805 Alcalá de Henares, Madrid, España.\\}
{\footnotesize $^2$ Centro de Matemática, Computação e Cognição
Universidade Federal do ABC, Santo André, SP, Brasil\\}
{\footnotesize $^3$ Dipartimento di Fisica `Ettore Pancini' Università di Napoli Federico II, Complesso Universitario di Monte S. Angelo, Via Cintia 80126 Napoli, Italy\\}
{\footnotesize $^4$ INFN Sez. di Napoli, Complesso Universitario di Monte S. Angelo, Via Cintia 80126 Napoli, Italy}

\bigskip
\noindent
{\footnotesize $^{5}$\texttt{fabio.di[at]uah.es},  $^{6}$\texttt{vladislav.kupriyanov[at]gmail.com},   $^{7}$\texttt{patrizia.vitale[at]unina.it}}

\begin{abstract}
\noindent
 We consider the problem of defining the field strength of abelian potentials in Poisson Electrodynamics. Considering a local symplectic groupoid as a symplectic realization of the Poisson spacetime, it is possible to define several field strengths with different transformation properties. The introduction of gauge invariant momenta allows to find the relations that exist among all of them: in particular, they all vanish simultaneously. A Poisson Chern-Simons model is then proposed and its equations of motion are shortly discussed.    
\end{abstract}
 

\section{Introduction}

In recent years a novel approach to understand the behavior of gauge symmetries in the presence of non-commutativity has been introduced by two of the authors \cite{Kupriyanov:2020sgx}. The idea is to first clarify at the semiclassical level, namely in the presence of a non-trivial Poisson structure,  what compatibility requirements are necessary to have a proper action of the gauge algebra on the space of fields. Two main requirements were adopted: the commutative limit of the deformed gauge theory should reproduce the standard gauge theory and the algebra of infinitesimal gauge transformations should close.
Within this framework several results have been obtained for the $U(1)$ symmetry \cite{Kupriyanov_2021, Kupriyanov:2021aet, Kurkov:2021kxa, Kupriyanov:2022ohu,Abla:2023odq,Bascone:2024mxs,Kupriyanov:2023gjj,Sharapov:2024bbu,Abla:2024wtr,Kurkov:2025abv}.  The derived models,    dubbed Poisson gauge models, are field theories defined on a background which is a Poisson manifold, with a non-trivial Poisson bracket among space-time variables, depending on a parameter, say $\theta$. The quantization of these brackets corresponds to a fully noncommutative spacetime, whereas the limit of zero Poisson bracket ($\theta\rightarrow 0$) yields back the classical spacetime,  hence the adjective "semiclassical" for this kind of models.

Interestingly, the geometric framework emerging from the above mentioned results appears to be that of symplectic groupoids and symplectic realizations \cite{Kupriyanov:2023qot, DiCosmo:2023wth}. With the present work we aim at clarifying the physical meaning of the geometric structures needed and their relation with the building blocks of  gauge theories, at least for the simplest one, that is electrodynamics. Indeed, in the framework of Poisson electrodynamics many different definitions of the electromagnetic field strength have been given and applied to build physical models. The relationship among them was however missing. One of the main results of the paper is the establishment of  a mathematical relation among all these  definitions within the groupoidal approach and an analysis of their different properties. The findings are then applied to $U(1)$ Chern-Simons theory, so to illustrate the physical implications.

In order to proceed further, a few  definitions are in order. More mathematical details are then given in Sec. \ref{sec:mp}.
Given a Poisson manifold, that is a pair $(X,\Theta)$ of a manifold $X$ and a  Poisson bivector field $\Theta$,  
the  cotangent bundle $T^*X$ can be endowed with the structure of a Lie algebroid, that is a vector bundle with a  homomorphism,  so-called anchor map, between the Lie algebra of sections of the bundle and the Lie algebra of vector fields on the base manifold $X$.
For  $T^*X$ the bracket is an extension of the bracket between Hamiltonian vector fields, so that this Lie algebroid contains information about the symplectic foliation associated with the Poisson structure. It has been proven \cite{Crainic} that any Lie algebroid can be integrated to a local Lie groupoid (see Sec. \ref{sec:mp} for a definition), in the sense that there is a local Lie groupoid whose Lie algebroid is the given one. If  obstructions are avoided, the result  extends globally. 
For  the Lie algebroid $T^*X$ the integrating Lie groupoid $\cG \rightrightarrows X$ is a symplectic groupoid, that is, a Lie groupoid  with a symplectic structure $\omega$ which is compatible with the algebraic properties of the groupoid. 

The gauge potentials of Poisson electrodynamics are identified with  {\it bisections} of the symplectic groupoid, namely smooth maps from the base manifold $X$ to the groupoid, with additional properties (see Sec. \ref{sec:mp}). As we shall see in detail, they allow to define two different field strengths, named $F^s$ and $F^t$, as the pull-back of the symplectic structure of the groupoid. The labels $s$ and $t$ refer respectively to the source and target maps of the groupoid, whose definition shall be recalled in Sec. \ref{sec:mp}. 
Gauge transformations compatible with the Poisson structure are then described in terms of the right action of the Lagrangian bisections. Under this action  the field strengths $F^s$ and $F^t$ transform differently: one is covariant while the other is invariant. The implications of this difference are relevant for the definition of gauge invariant action functionals.

A different definition of the field strength was first introduced within the original approach to Poisson electrodynamics,  based on the theory of constraints \cite{Kupriyanov:2021aet,Kupriyanov:2022ohu}. If $(x^j,p_j)$ are local coordinates on the local symplectic groupoid $\cG \rightrightarrows X$, and the image of a bisection is given by $p_j={A}_j(x)$, infinitesimal gauge transformations with gauge parameter $f$ can be written as $\delta_f {A}_j = \left\lbrace p_j - {A}_j, f \right\rbrace_{\cG}\mid_{p=\mathcal{A}}$, where  $\left\lbrace  \cdot , \cdot \right\rbrace_{\cG}$ is the Poisson bracket on $\cG$ obtained by symplectic embedding of the Poisson bracket $\Theta$ on the base manifold. Consequently, a natural definition for the field strength $F_{jk}$ would be
\be
F_{jk} = \left\lbrace p_j-{A}_j , p_k - {A}_k \right\rbrace_{\cG}\mid_{p=A}\,.
\label{straight} 
\ee
However, this definition does not possess the desired properties under gauge transformations \cite{Kupriyanov:2022ohu}. To cure this feature, a new tensor $\mathcal{F}$ was proposed, related to \eqn{straight} according to 
\be
\mathcal{F}=\rho^T(A)F\rho(A) \label{calligF}
\ee 
with the matrix $\rho$ satisfying a suitable relation, so-called second master equation by the authors of \cite{Kupriyanov:2022ohu}.

The two approaches are not immediately connected and the relation between them was unknown up to now. In this paper we  fill the gap. Our main result will be  in Sec. \ref{sec3}, where we review the definition of the covariant and invariant field strengths, we establish an explicit relation between \eqn{calligF}, $F^s$ and $F^t$ and discuss their physical content. 

To this,  we start with  the description of the dynamics of charged point particles in interaction with the electromagnetic field  in the context of Poisson electrodynamics. Generalizing the standard approach in Maxwell theory based on the introduction of gauge-invariant momenta, we define the analogous object for Poisson electrodynamics. In order to preserve the  concept of momenta, at least locally, we consider the situation of a local symplectic groupoid which is the support for both a right and a left action of the group of bisections. If a bisection $\Sigma$ represents a gauge potential,  we obtain the gauge invariant momenta by transforming the original momenta via the action from the right of the bisection $\Sigma^{-1}$, note that in \cite{Kupriyanov:2023qot} the minimal interaction is introduced as the right action of the bisection $\Sigma$. This action is a bundle map, but does not preserve the linear structure along the fiber, except for the cases of constant and linear non-commutativity. The Poisson bracket between gauge-invariant momenta defines a new field strength  that we show to be  equivalent to the one defined by \eqn{calligF}. Then, using  the structures of the symplectic groupoid, we are able to connect the latter with the field strengths $F^t$ and $F^s$ previously described. 

The main result will consist in the fact that all these field strengths are related one to one and vanish simultaneously. The relation among them is summarised in the following diagram 

\begin{equation}
\xymatrix@C+25pt@R+20pt{ 
&\widehat{\mathcal{F}}_{ab}(y)  \ar@<0.5ex>[d] \ar@<0.5ex>[r] & F^t_{ab}(y) \ar@<0.5ex>[l] \ar@<0.5ex>[d]  \\ F_{ab}(x)\ar@<0.5ex>[r] &\mathcal{F}_{ab}(x) \ar@<0.5ex>[l] \ar@<0.5ex>[u] \ar@<0.5ex>[r]  &  F^s_{ab}(x) \ar@<0.5ex>[l] \ar@<0.5ex>[u]
}
\end{equation}
where the arrows represent the maps interconnecting the different field strengths and $\widehat{\mathcal{F}}$ is a new tensor whose meaning will become clear later on. This means that all of them measure the deviation of a bisection from being a Lagrangian submanifold of the symplectic groupoid $\cG$. Their behavior under gauge transformations is discussed, as well as their different geometric properties.

In order to show how the formalism applies, in Sec. \ref{CSmod} we discuss a Poisson-Chern-Simons model based on our findings. The resulting equations of motion will be given in terms of the covariant field strength, namely 
\be
F^s = 0\,,
\ee
so that the corresponding solutions are Lagrangian bisections, a concept that replaces the one of flat connections. The invariance property under gauge symmetries is presented, so that, the equivalence classes of solutions result in symplectic isotopy classes of embeddings of the base manifold $X$ inside $\cG$. 

In the hope  of making  the paper interesting for a larger community and not to burden the text with unnecessary mathematical complexity, we have chosen to present most of the technical aspects already in Sec. \ref{sec:mp} within two main examples: constant and linear Poisson tensors. The generalization of the mathematical structures to local symplectic groupoids is then given in Sec. \ref{localsympl}.  In Sec. \ref{sec3} we adopt the same approach, starting with the definition of the electromagnetic field for constant and linear Poisson tensors and then generalizing the definition to local symplectic groupoids in Sec. \ref{locsympg}. Besides the application to Chern-Simons theory in Sec. \ref{CSmod}, the last section contains a short summary and concluding remarks.

\section{The groupoidal approach to Poisson gauge theory}
\label{sec:mp}

Symplectic groupoids were introduced at the end of the 80's independently by Weinstein\cite{Weinstein1987}, Karasev\cite{Karasev1989} and Zkrezewski\cite{Zakrzewski1990}. Their geometrical properties have been widely studied especially in connection with the problem of quantization: it has played an important role in the developments of geometric quantization of Poisson manifolds (see, for instance, \cite{Weinstein1991, WeinsteinXu1991,   Crainic2004}), in deformation quantization (see for instance \cite{Landsman1998, CattaneoFelder2000, CattaneoDherinFelder2010, CattaneoDherinWeinstein2013, CattaneoMnevReshetikhin2018}) or to provide geometric models of noncommutative algebras (see, for instance \cite{Renault1980, Shlyakhtenko, DaSilvaWeinstein1999, Hawkins2008}). 

In this section we are not willing to provide a detailed review of this long history. Our goal, here, is to give a brief introduction to symplectic groupoids, aimed at the application to Poisson gauge theory. We also seek to  provide explicit expressions in local coordinates which will be repeatedly used in the rest of the paper. 

After  recalling the main mathematical structures involved,  we will focus on two specific examples: the symplectic groupoid $\mathbb{R}^n\times \mathbb{R}^n \rightrightarrows \mathbb{R}^n$ that integrates the canonical Poisson structure $\{x^i,x^j\}=\Theta^{ij}$ on the base manifold $X=\mathbb{R} ^n$ and the symplectic groupoid $\mathcal{G}\rightrightarrows X$ with $\mathcal{G}$ being the cotangent bundle  $T^\ast G=G\times\mathfrak{g}^\ast$ integrating the Poisson manifold $X=\mathfrak{g}^{\ast}$, the dual Lie algebra of the Lie group $G$, with linear bracket $\{x^i,x^j\}=c^{ij}_k x^k$. Then, we will consider the generalization to local symplectic groupoids, which have a close relationship to symplectic realizations. 


\begin{definition}\label{def.grooupoid}
A groupoid $(\cG, X, \sfs, \sft, \sfi, \sfm)$ is defined in terms  of the following structures:
\begin{itemize}
\item a pair of sets $\cG$ and $X$, the groupoid and the base, respectively, whose elements are  called morphisms (or arrows) for $\cG$ and objects (or units) for $X$;
\item a pair of maps $\sfs,\sft\,\colon\,\cG\,\rightarrow\,X$, with $\sfs$ the source and $\sft$ the target;
\item a map $\sfi\,\colon\,X\,\rightarrow \cG$ called the object inclusion;
\item a partial multiplication $\sfm\,\colon\, \cG^{(2)}\,\rightarrow \,\cG$ defined on the set 
$$
\cG^{(2)}=\left\lbrace (\beta,\alpha)\in \cG\times \cG\,\mid\, \sft(\alpha)=\sfs(\beta) \right\rbrace
$$
of composable pairs.  
\end{itemize} 
Additionally, the composition law satisfies the following properties:
\begin{itemize}
\item associativity: for every triple of composable morphisms $(\gamma, \beta, \alpha)$ we have that $\sfm(\gamma,\sfm(\beta,\alpha)) = \sfm(\sfm(\gamma , \beta), \alpha)$;
\item existence of units: for every morphism $\alpha\in \cG$ we have $\sfm(\sfi({\sft(\alpha)}), \alpha) = \alpha = \sfm(\alpha, \sfi({\sfs(\alpha)}))$;
\item existence of inverses: for every morphism $\alpha \in \cG$ there is an inverse morphisms $\alpha^{-1}\in \cG$ such that $\sfm(\alpha^{-1}, \alpha) = \sfi({\sfs(\alpha)})$ and $\sfm(\alpha,\alpha^{-1})=\sfi({\sft(\alpha)})$. 
\end{itemize}
\end{definition}

The notation $\alpha\,:\,x\,\rightarrow\, y$ will be occasionally employed to specify that the morphism $\alpha \in \cG$ has source $\sfs(\alpha)=x$ and target $\sft(\alpha)=y$, whereas the groupoid $(\cG, X, \sfs, \sft, \sfi, \sfm)$ will be denoted by $\cG \rightrightarrows X$, indicating only the groupoid and its base. In the following,  if there is no danger of confusion, the notation $\circ$ will often be used for the composition map and the symbol $\tau$ will denote the inversion map.

Up to now we have introduced only the algebraic properties of the groupoid. 
A \textit{Lie groupoid} is a groupoid $\cG\rightrightarrows X$ where $\cG$ and $X$ are smooth manifolds such that the maps $\sfs,\sft$ are submersions, the object inclusion and the multiplication are smooth maps. The leaves $\cG_x = s^{-1}(x)$ and $\cG^x = t^{-1}(x)$ are submanifolds of the groupoid, whereas the morphisms $\cG_x^x = \left\lbrace \gamma \in G \,\mid\, \sfs(\gamma)=\sft(\gamma)=x \right\rbrace$ form a Lie subgroup, called the isotropy subgroup at $x$.

Lie groups are examples of Lie groupoids, where the space of objects is the singleton, i.e., the identity of the Lie group. Another example is the Lie groupoid $X\rightrightarrows X$ made up of the points of a manifold $X$: in this case the morphisms are only units. Given a manifold $X$, another groupoid which can be constructed is the groupoid of pairs $X\times X\,\rightrightarrows X$ whose morphisms are the pairs $(y,x)$ of points of $X$ and the composition law is given by $(z,y)\circ (y,x) = (z,x)$.

In the same way as Lie algebras are the infinitesimal description of a Lie group, Lie algebroids capture the infinitesimal properties of Lie groupoids. Let us introduce them in the following definition:

\begin{definition}\label{def.algebroid}
A {\it Lie algebroid} $(\sfA, \pi, X, \nu, [\cdot,\cdot]_A)$ is defined in terms of  the following structures:
\begin{itemize}
\item a smooth vector bundle $\pi\,\colon\,\sfA\,\rightarrow\,X$ over a manifold $X$;
\item a smooth vector bundle map $\nu\,\colon\,\sfA\,\rightarrow\,TX$ called the anchor map;
\item a bracket $[\cdot,\cdot]_A\,\colon\,\Gamma(\sfA)\times \Gamma(\sfA)\,\rightarrow\,\Gamma(\sfA)$ on the $C^{\infty}(X)$-module $\Gamma(\sfA)$ of sections of the vector bundle $\sfA$ which is $\mathbb{R}$-bilinear, alternating and satisfying the Jacobi identity. 
\end{itemize}   
Additionally, the following compatibility properties must be satisfied:
\begin{itemize}
\item $[X,fY]_A= f[X,Y]_A + \mathcal{L}_{\nu(X)}(f) Y$, where $\mathcal{L}_{\nu(X)}$ represents the Lie derivative;
\item $\nu([X,Y]_A) = [\nu(X),\nu(Y)]$, which means that the anchor map is a Lie algebra homomorphism. 
\end{itemize}
\end{definition}

In parallel to the previous list of examples, a Lie algebra is a Lie algebroid over a singleton. Analogously, the tangent bundle $TX\,\rightarrow\,X$ of a manifold $X$ is a Lie algebroid whose anchor map is the identity and whose Lie bracket is the commutator. 

Given a Lie groupoid $\cG\,\rightrightarrows X$, there exists a Lie algebroid $A\rightarrow X$ which describes its infinitesimal features. The supporting vector bundle is given by 
\be
\sfA = \bigsqcup_{x} T_{\sfi(x)}\cG_x\,  
\ee
where $T_{\sfi(x)}\cG_x$ is the tangent space to the leaf $\cG_x$ at the unit $\sfi(x)$.
Any section of this vector bundle defines a unique right-invariant vector field on $\cG$ (see section 2.3 below and \cite{Mackenzie}): since the commutator of right-invariant vector fields is still right-invariant, the commutator induces a Lie algebra structure on the space of sections of $\sfA$. Eventually, the anchor map $\nu\,\colon\,\sfA\,\rightarrow \,TX$ is given by the pushforward $\sft_{*}\,\colon \, T\cG \,\rightarrow\,TX$ restricted to the vector subbundle $\sfA\subset T\cG$.

A particular interesting example  for us, that will be important in the following  is the Lie algebroid structure on the cotangent bundle $T^*X$ of a Poisson manifold $(X,\Theta)$. In this case the supporting vector bundle is $T^*X\rightarrow X $ and the anchor map is given by the bundle map $\Theta^{\sharp}\,\colon\,T^*X\,\rightarrow\,TX$ which acts on a cotangent vector $p_x$ by contraction, i.e., $\Theta^{\sharp}(p_x)$ is the vector at $x\in X$ which acts on the covector $q_x$ as follows 
\be 
\Theta^{\sharp}(p_x)(q_x) = \Theta(p_x,q_x)\,.
\ee
Hence, the Lie bracket structure is given by the Lie algebroid extension of the bracket between closed forms $\left[ \dd f, \,\dd g \right]_A = \dd \left\lbrace f,g \right\rbrace_{\Theta}$. In particular, the bracket between two differential forms $\sfa$ and $\mathsf{b}$ is given by
\begin{equation}
    \left[ \sfa , \mathsf{b} \right]_A = \mathcal{L}_{\Theta^{\sharp}(\sfa)}\mathsf{b} - \mathcal{L}_{\Theta^{\sharp}(\mathsf{b})}\mathsf{a} - \dd \left( \Theta(\sfa, \mathsf{b}) \right)\, .
\end{equation}
 
To conclude this mathematical introduction, one more definition is in order. 

\begin{definition}\label{def.1}
 A \textit{symplectic realization} of a Poisson manifold $(X, \Theta)$,  is a symplectic manifold $(\cG, \omega)$ endowed with a Poisson morphism $\pi\,\colon\, \cG \,\rightarrow \,X$. The realization is \textit{full} if $\pi$ is a surjective submersion and it is \textit{strict} if $\pi$ admits a global Lagrangian section. If $(\cG, \omega)$ is a symplectic manifold, a \textit{full dual pair} is a pair of Poisson morphisms $s,t\,\colon\,\cG\,\rightarrow\,X$ such that the s-fibers and t-fibers are symplectic orthogonal. 
\end{definition}
A strict full dual pair will be denoted with the symbol $\cS \st X$. 

\subsection{Canonical non-commutativity, \texorpdfstring{$\{x^i,x^j\}=\Theta^{ij}=const$} {}}\label{sec_1-1}
{Let us consider $X= \R^n$ endowed with a constant Poisson tensor $\Theta$. 
A symplectic realization of the Poisson structure $\{x^i,x^j\}=\Theta^{ij}$ is given by $\{x^i,p_j\}=\delta^i_j$ and $\{p_i,p_j\}=0$, with $(x,p)\in \R^{2n}$.}  Introducing the Darboux coordinates $\{Q^i,Q^j\}=\{P_i,P_j\}=0$ and $\{Q^i,P_j\}=\delta^i_j$, we may express, $x^i=Q^i-\frac12\Theta^{ij}P_j$, and $p_i=P_i$. Then the symplectic two-form $\omega=dP_i \wedge dQ^i$ becomes
\begin{equation}\label{omegac}
\omega(x,p)=dp_i\wedge dx^i+\frac12 \Theta^{ij}dp_i\wedge dp_j\,.
\end{equation}
{$(\R^{2n}, \omega)$ can be seen to be a groupoid, $\cG\,\rightrightarrows\,X$, when the following structures are recognized. We define the source and target maps as the two morphisms  $\sfs:\mathbb{R} ^{2n}\to X$ and $\sft:\mathbb{R} ^{2n}\to X$ respectively given by}
\begin{equation}\label{stmc}
            \sfs^i(x,p)= x^i \qquad \text{and} \qquad \sft^i(x,p)=x^i+\Theta^{ij}p_j \ ,
        \end{equation}
for all $(x,p)\in \mathbb{R} ^{2n}$. One may easily check that $\{ \sfs^i. \sfs^j\}=\Theta^{ij}$, $\{ \sfs^i, \sft^j\}=0$ and $\{\sft^i,\sft^j\}=-\Theta^{ij}$. 

According to def. \ref{def.grooupoid}, the elements of a groupoid can be composed, but the multiplication is only defined for those elements belonging to the set $\mathcal{G}^{(2)} = \left\lbrace \left( (x^{\prime},p^{\prime}), (x,p) \right) \in \cG\times \cG \mid \sfs(x^{\prime},p^{\prime}) = \sft(x,p) \right\rbrace$ of composable pairs. Then, the multiplication of two composable elements $\sigma_1=(x^i,p_i)$ and $\sigma_2=(x^i+\Theta^{ij}p_j,p^\prime_i)$ is given by,
\begin{equation}
(x^i+\Theta^{ij}p_j,p^\prime_i)\circ(x^i,p_i)=(x^i,p_i+p^\prime_i)\,.
\end{equation}
{The gauge potentials of abelian gauge theories, being one forms, namely sections of the cotangent bundle on space-time, are associated in this framework with  {\it bisections}  of the groupoid \cite{{Kupriyanov:2023qot}}.}
\begin{definition}\label{bisec}
A \textit{bisection} is a smooth closed submanifold of $\cG$, $\Sigma\subset \cG$,  such that the restrictions of $\sfs$ and $\sft$ to $\Sigma$ are diffeomorphisms.
\end{definition}
{In the simple case at hand, we have $\Sigma_A=(x^i,A_i(x))$, namely the bisections can be identified with their infinitesimal counterpart, that is, sections of the corresponding algebroid \cite{{DiCosmo:2023wth}}.The relation between bisections of the groupoid and sections of the corresponding algebroid will be clarified in Sec. \ref{localsympl}, in the more general setting of local symplectic groupoids. }

The set of bisections $\mathsf{B}(\cG)$ form a group according to the following multiplication rule 
\begin{eqnarray}\label{compbisec}
\Sigma_B\circ\Sigma_A &=&\left(x^i+\Theta^{ij}A_j(x),B_i(x^i+\Theta^{ij}A_j(x))\right) \circ \left(x^i,A_i(x)\right) \\
&=&\left(x^i,B_i(x^i+\Theta^{ij}A_j(x)) + A_i(x)\right).\notag
\end{eqnarray}
Note that the identity element is the bisection given by the set of identities of the groupoid, i.e., for the case at hand, the zero section of the cotangent bundle $T^*X$, whereas the inverse element reads,
\begin{equation}
\Sigma^{-1}_A=\left(x^i+\Theta^{ij}A_j(x),-A_i(x)\right).
\end{equation}
To each bisection $\Sigma$ there corresponds a pair of maps, $\Sigma_s: X\to  \cal{G}$ and $\Sigma_t: X\to  \cal{G}$ defined by 
\be\label{sigmas}
\Sigma_s=\left(s|_\Sigma\right)^{-1}
\ee
and 
\be\label{sigmat}
\Sigma_t=\left(t|_\Sigma\right)^{-1}.
\ee
By definition,  $s\circ \Sigma_s=id$ and $t\circ \Sigma_t=id$, and thus, $\Sigma_s(x)=(x^i,A_i(x))$ and $\Sigma_t(x)=(\tilde x^i,A_i(\tilde x))$, with $\tilde x$ satisfying the equation 
\begin{equation}\label{tildexc}
x^i=\tilde x^i+\Theta^{ij}A_j(\tilde x)\,.
\end{equation}
At least perturbatively one finds, $\tilde x^i=x^i-\Theta^{ij}\,A_j(x)+\dots$. 

In order to define gauge transformations in terms of Lagrangian bisections of the groupoid \cite{Kupriyanov:2023qot}, let us  introduce the diffeomorphisms $r_\Sigma$ and $l_\Sigma$ on the base manifold $X$
induced by the bisections,
\bea
l_\Sigma &=&\sft\circ \Sigma_\sfs\,,\qquad l^i_\Sigma(x)=x^i+\Theta^{ij}A_j(x)\,,\\
r_\Sigma &=&\sfs\circ \Sigma_\sft\,,\qquad r^i_\Sigma(x)=\tilde x^i\,,
\eea
with the obvious relation,
\begin{equation}\label{l-r}
l_\Sigma\circ r_\Sigma=r_\Sigma\circ l_\Sigma=id\,,\qquad \ell_\Sigma=(r_\Sigma)^{-1}=r_{\Sigma^{-1}}.
\end{equation}

Gauge transformations of the gauge potentials  are then defined as  right translations  by the {Lagrangian bisections} $\Lambda=(x^i,\Lambda_i(x))$ (i.e. bisections  satisfying $\left.\omega\right|_\Lambda=0$), as follows
\begin{equation}
\Sigma^\prime_A=\Sigma_A\circ\Lambda\,.
\end{equation}
The right action of bisections $\Sigma_A$ is also defined for the elements of the groupoid $(x^i,p_i)$ according to
\begin{equation}
(x^i,p_i)\circ \Sigma_A=\left(r_{\Sigma}^i(x),p_i+A_i(r_{\Sigma}(x))\right),
\end{equation}
where $r_{\Sigma}(x)=\tilde{x}$.
Therefore it is possible to define gauge transformations 
for the elements $(x,p)$ in terms of  right translations, according to
\begin{equation}
(x^i,p_i)^\prime=(x^i,p_i)\circ\Lambda = (\tilde{x}^i, p_i+\Lambda_i(\tilde{x}))\,.
\end{equation}
Then the gauge invariant elements can be constructed as
\begin{eqnarray}\label{gicmc}
(y^i,\pi_i):=(x^i,p_i)\circ \Sigma^{-1}_A = \left(x^i+\Theta^{ij}A_j(x),p_i-A_i(x)\right).
\end{eqnarray}
Indeed, one may easily see that they verify
\begin{equation}\label{coordtransf}
(y^i,\pi_i)^{\prime}=(x^i,p_i)^\prime \circ \left(\Sigma^\prime_A\right)^{-1}=(x^i,p_i) \circ \Lambda \circ \Lambda^{-1} \circ \Sigma^{-1}_A = (y^i, \pi_i)\,.
\end{equation}
{In order to make contact with standard $U(1)$ gauge transformations corresponding to $\Theta=0$ we observe that  $\Lambda_i(x)=\partial_if(x)$ \cite{Kupriyanov:2023qot}, so to  recover the standard result.}

\subsection{Lie-algebra-type non-commutativity, \texorpdfstring{$\Theta^{ab}(x)=f^{ab}_c\,x^c$.}{}}\label{sec_1-2}
 When the base manifold $X$ is $\mathfrak{g}^\ast$, the dual of a Lie algebra $\mathfrak{g}$, it has a natural Poisson structure given by 
\begin{equation}\label{PB1}
\{x^a,x^b\}=f^{ab}_c\,x^c\,,
\end{equation}
with $x^a$ the coordinate functions in $\mathfrak{g}^\ast$ and  $f^{ab}_c$ the structure constants of $\mathfrak{g}$.
 Let us consider the set of  Darboux coordinates in {the associated phase space $\mathcal{G}$}, $\{Q^a,Q^b\}=\{P_a,P_b\}=0$ and $\{Q^a,P_b\}=\delta^a_b$. By setting $p_a=P_a$ one may express the $x$-variables in terms of the Darboux coordinates as 
\begin{equation}
x^a=\gamma^a_b(p)\,Q^b\,,
\end{equation}
where the matrix $\gamma^a_b(p)=\delta^a_b+{\cal O}(p)$ should satisfy the equation\footnote{This is an instance of the so-called first master equation which was mentioned in the introduction.}
\begin{equation}\label{eqgamma}
 \gamma^a_k\,\partial^k_p\gamma_m^b-\gamma^b_k\,\partial^k_p\gamma^a_m-\gamma^k_m\,f_k^{ab}=0\,,\qquad \partial_p^b=\frac{\partial}{\partial p_b}\,,
\end{equation}
in order for the Jacobi identity to hold. 
The Poisson brackets between the original $x$ and $p$ variables read then
\begin{equation}\label{sr}
\{x^a,x^b\}=f^{ab}_c\,x^c\,,\qquad \{x^a,p_b\}=\gamma^a_b(p)\,,\qquad \{p_a,p_b\}=0\,.
\end{equation}
endowing   $\mathcal{G}$ with a non-degenerate Poisson structure $\omega^{-1}$. 
This is  a {\it symplectic realization} of the Poisson structure (\ref{PB1}) {and we recognize $\mathcal{G}\equiv\mathfrak{g}^*\times G$}, where $G$ is a corresponding Lie group representing the momentum space. Note that the curved momentum space is a characteristic feature of noncommutative theories and their semiclassical counterparts. Other models with curved momentum space were considered in different contexts in \cite{Freidel:2005me,Amelino-Camelia:2011lvm,Kowalski-Glikman:2013rxa,Guedes:2013vi,Smilga:2022xij,Franchino-Vinas:2023rcc}. The inverse of $\omega^{-1}$ defines a closed two-form $\omega$ which is a symplectic form on $\mathcal{G}$,
\begin{equation}\label{omega}
\omega(p,x)=-\bar\gamma_a^i(p)\,\dd x^a\wedge \dd p_i+\frac12\,\bar\gamma_b^i(p)\,f_a^{bc}x^a\,\bar\gamma_c^j(p)\,\dd p_i\wedge \dd p_j\,.
\end{equation}
where $\bar\gamma^i_a:=\left(\gamma^{-1}\right)^i_a$. The latter satisfies an analogous equation to  (\ref{eqgamma}),
\begin{equation}\label{eqgammabar}
\partial^j_p\,\bar\gamma^i_a-\partial^i_p\,\bar\gamma^j_a+\bar\gamma^j_k\,f^{kl}_a\,\bar\gamma^i_l=0\,.
\end{equation}
From Eq. \eqn{eqgamma} it is immediate to recognize  that the matrices $\gamma^a_b(p)$ define the right-invariant vector fields on $G$,
\begin{equation}\label{vfgamma}
\gamma^a=\gamma^a_i(p)\,\partial^i_p\,,\qquad  [ \gamma^a, \gamma^b]=f^{ab}_c \gamma^c\,.
\end{equation}
while the inverse matrices $\bar\gamma_a^i(p)$ provide a  basis of right-invariant one-forms on the same group,
\begin{equation}\label{ofgamma}
 \bar\gamma_a= \bar\gamma_a^i(p)\,\dd p_i\,,\qquad  \dd g\,g^{-1}=t^a\, \bar\gamma_a^i(p)\,\dd p_i\,, \qquad \qquad \dd\bar\gamma_a=-\frac12f_a^{bc}\bar\gamma_b\wedge \bar\gamma_c\,,
\end{equation}
with $t^a$ the Lie algebra generators, $ \gamma^a (\bar\gamma_b)=\delta^a_b$, and Eqs. (\ref{eqgamma}), (\ref{eqgammabar})  being respectively identified with the Lie bracket of right-invariant vector fields (\ref{vfgamma}), and the Maurer-Cartan equation (\ref{ofgamma}) for right invariant one-forms. In what follows, we will also need the left-invariant   vector fields $\bar\rho^b$ and left-invariant forms $\rho_b$:
\begin{equation}
   \bar \rho^a=\bar \rho^a_i(p)\,\partial_p^i\,,\qquad \rho_b=\rho_b^i(p)\,\dd p_i\,,\qquad g^{-1}  \dd g=t^b \,\rho^i_b(p)\,\dd p_i\,,\qquad  \bar \rho^a(\rho_b)=\delta_b^a\,.
\end{equation}
They satisfy the identities 
\begin{equation}
    \dd\rho_c=\frac12f_c^{ab}\rho_a\wedge \rho_b\,,\qquad 
  [\bar \rho^a,\bar \rho^b]=-f^{ab}_c\bar\rho^c\,,
  \end{equation}
implying the corresponding equations on the matrices $\rho^i_a(p)$ and $\bar \rho^a_i(p)$.
In addition, the left- and right-invariant vector fields commute,
\begin{equation}\label{rhoeq}
\qquad [\bar \rho^a, \gamma^b]=0\,. 
\end{equation}
The latter implies the equation,
\begin{equation}\label{eqrho}
 \gamma^a_k\,\partial^k_p\rho_b^m+\rho_b^k\,\partial^m_p \gamma^a_k=0\,,
\end{equation}
{
which is an instance of the so-called second master equation \cite{Kupriyanov:2022ohu} for $\rho$  not depending on the $x$ variables, as in this case (see\eqn{secondmaster}).}

Another natural way to define the symplectic two-form on $\mathcal{G}$ is by introducing the Liouville one form. 
In terms of the right-invariant one forms it reads,
\begin{equation}\label{LF}
    \vartheta(x,p)=\langle x, \dd g(p)\,g^{-1}(p)\rangle=x^a\, \bar\gamma_a^i(p\,)\dd p_i\,.
\end{equation} 
Taking into account the Maurer-Cartan equation (\ref{ofgamma}) one finds immediately for the symplectic two form $\omega= - \dd\vartheta$ the expression in local coordinates (\ref{omega}).

Now let us return to the groupoid definition. {The phase space $\mathcal{G}=\mathfrak{g}^\ast\times G$ can be endowed with the structure of a groupoid by recognizing the following structures}. The source $\sfs:\mathcal{G}\to X$ and target $\sft:\mathcal{G}\to X$ maps are defined by
\begin{equation}\label{lpb}
 \sfs(x,g)= x \qquad \text{and} \qquad \sft(x,g)=\mathsf{Ad}_{g^{-1}}^\ast(x)=g^{-1}x\,g \ ,
 \end{equation}
for all $(x,g)\in \mathfrak{g}^\ast\times G$. 
{The multiplication map between composable elements of  $\mathcal{G}$,  $\sigma=(x,g)$ (see \ref{def.grooupoid})
is given by
 \begin{equation}\label{multel}
\sfm(\sigma_1,\sigma_2)=
\left(g^{-1}x\,g,h\right) \circ (x,g)=(x,g\cdot h)\,,
\end{equation} }
where $g\cdot h$ denotes the multiplication in  the group.
In terms of the local coordinates $\{p_i\}$ on $G$, so that $g=g(p)$  the group multiplication  can be expressed through the BCH formula,
\begin{equation}\label{BCH}
g(p)\cdot g(q)=g(p\oplus q)
\end{equation}
 with
\begin{equation}
p\oplus q=p+q+\frac12\,[p,q]+\dots\,,\qquad [p,q]=t^a\,f^{bc}_a\,p_bq_c\,,
\end{equation}
and $\{t^a\}$ being a base in the Lie algebra $\mathfrak{g}$ with commutation relations $[t^a,t^b]=f^{ab}_c\,t^c$. Let us introduce the matrix of the coadjoint representation,
\begin{equation}\label{mar}
\mathsf{Ad}_{g^{-1}}^\ast(x)=g^{-1}x\,g=t^\ast_a\,\Delta^a_b(p)\,x^b\,,
\end{equation}
with $\Delta(p)\cdot\Delta(q)=\Delta(p\oplus q)$. 
It can be expressed as
\begin{equation}\label{Delta}
\Delta^a_b(p)=\bar\gamma_b(\bar \rho^a)=\bar\rho^a_c(p)\,\bar\gamma^c_b(p)=\delta^a_b+f^{ac}_b\,p_c+ \dots\,,
\end{equation}
which implies the equation
\begin{equation}\label{eqDelta}
\gamma^c_d\,\partial^d_p\Delta^b_a-f^{dc}_a\Delta_d^b=0\,.
\end{equation}
Now by (\ref{lpb}), we get that $\sfs^a(g,x)=x^a$ and $\sft^a(g,x)=\Delta^a_b(p)\,x^b$. Using the equations on $\bar\rho^a_c(p)$ and $\bar\gamma^c_b(p)$ one may check that,
\begin{equation}\label{s-t}
     \{\sfs^a, \sfs^b\}=\Theta^{ab}(\sfs) \,,\quad    \{\sft^a, \sft^b\}=-\Theta^{ab}(\sft) \,,\quad   \{\sfs^a, \sft^b\}=0 \ .
     \end{equation}
{Analogously to the previous case of constant non-commutativity, gauge potentials are associated with bisections of the groupoid,   $\Sigma_A=(x,{\cal A}(x))$, with ${\cal A}(x)=\exp(t^aA_a(x))$, and $A$ the local form of the potential, associated with sections of the algebroid.} The set of all bisections $\mathscr{B}(\cal{G})$ of $\cal{G}$ has the  structure  of a regular Lie group, with product defined as in  \eqn{compbisec} 
\begin{eqnarray}\label{mbi}
\Sigma_B\circ \Sigma_A&=&\left({\cal A}^{-1}(x)\,x\,{\cal A}(x),{\cal B}({\cal A}^{-1}(x)\,x\,{\cal A}(x))\right)\circ \left(x,{\cal A}(x)\right)\\
&=&\left(x, {\cal A}(x)\cdot {\cal B}({\cal A}^{-1}(x)\,x\,{\cal A}(x))\right)\,.\notag
\end{eqnarray}
The unit bisection is just the base manifold $X=(e,x)$, where $e\in G$ is a unity of the group and the inverse element is represented by
\begin{equation}
\Sigma^{-1}_A=(x,{\cal A}(x))^{-1}=\left({\cal A}^{-1}(x)\,x\,{\cal A}(x),{\cal A}^{-1}(x)\right)\,.
\end{equation}
The action of a bisection $\Sigma_A$ on the elements of the groupoid $(x,g)$ is
\begin{equation}
(x,g)\circ \Sigma_A=\left(\tilde x, {\cal A}(\tilde x)\cdot g\right),
\end{equation}
where $\tilde x(x)$  is implicitly defined by the equation, 
\begin{equation}\label{tildex}
x={\cal A}^{-1}(\tilde x)\,\tilde x\,{\cal A}(\tilde x)\,.
\end{equation}
Taking into account Eq. (\ref{Delta}),  perturbatively one finds, $\tilde x^a=x^a-f^{ac}_b\,A_c(x)\,x^b+\dots$. 

The associated  maps $\Sigma_s: X\to  \cal{G}$ and $\Sigma_t: X\to  \cal{G}$ defined by Eqs. \eqn{sigmas}, \eqn{sigmat},  respectively read in this case $\Sigma_s(x)=(x,{\cal A}(x))$ and $\Sigma_t(x)=(\tilde x,{\cal A}(\tilde x))$, with $\tilde x$ given by (\ref{tildex}). Therefore, the  diffeomorphisms on the base manifold, $l_\Sigma: X\to X$ and $r_\Sigma: X\to X$ induced by the bisections, are respectively given by
\bea
l_\Sigma&=&\sft\circ \Sigma_\sfs\,,\qquad l_\Sigma(x)={\cal A}^{-1}(x)\,x\,{\cal A}(x)\,,\\
r_\Sigma&=&\sfs\circ \Sigma_\sft\,,\qquad r_\Sigma(x)=\tilde x\,.
\eea
Similarly to the case of constant noncommutativity, gauge transformations are then defined in terms of  Lagrangian bisections $\Lambda=(x,\Lambda(x))$,  that is bisections verifying  $\left.\omega\right|_\Lambda=\omega(x,\Lambda(x))=0$. In local coordinates this reads,
\begin{eqnarray}\label{omegaL}
\omega(x,\Lambda(x))&=&d\langle x,d\Lambda( x)\,\Lambda^{-1}( x)\rangle\\
&=&\bar\gamma_a^i(\Lambda)\,dx^a\wedge d\left(\Lambda_i(x)\right)-\frac12\,\bar\gamma_b^i(\Lambda)\,f_a^{bc}x^a\,\bar\gamma_c^j(\Lambda)\,d\left(\Lambda_i(x)\right)\wedge d\left(\Lambda_j(x)\right)=0\,.\notag
\end{eqnarray}
The latter implies that $\Lambda_i(x)=\partial_if+{\cal O}(f^2)$.  We will call $f (x)$ as a gauge parameter. 

One may easily check that the product of two Lagrangian bisections is again a Lagrangian bisection. So, the set of Lagrangian bisections $\mathscr{L}(\cal{G})$ form a subgroup in the group of all bisections $\mathscr{B}(\cal{G})$. The gauge transformation of the gauge field $\Sigma_A$ is determined by a Lagrangian bisection,
\begin{equation}\label{gtR}
\Sigma\,\to\,\Sigma^\prime=\Sigma_A\circ \Lambda=\left(x, \Lambda(x)\cdot{\cal A}\left(\Lambda^{-1}(x)\,x\,\Lambda(x)\right)\right).
\end{equation}
The gauge transformation of the elements of the groupoid (phase space coordinates) is,
\begin{equation}
(x,g)^\prime=(x,g)\circ \Lambda=\left(\tilde x_\Lambda, \Lambda(\tilde x_\Lambda)\cdot g\right)
\end{equation}
and the gauge invariant phase space coordinates as before are constructed by,
\begin{equation}\label{gicml}
(y,h) :=(x,g)\circ \Sigma^{-1}=\left({\cal A}^{-1}(x)\,x\,{\cal A}(x), {\cal A}^{-1}(x)\cdot g\right).
\end{equation}
In particular, the gauge invariant momenta are $h={\cal A}^{-1}(x)\cdot g$.

\subsection{Local symplectic groupoid}\label{localsympl}
The previous examples can be generalized to the case of local symplectic groupoids \cite{Crainic,cdw-1987}, whose properties we  briefly recall. In a sentence, a local groupoid $\cG$ is a manifold with a distinguished submanifold $\cG_0$, made up of the units, with a composition law which is defined only on a sufficiently small neighborhood of the units. More precisely, let us introduce the following definition \cite{cdw-1987}:
\begin{definition}
A local Lie groupoid $\cG \,\st \, X$ is a smooth manifold $\cG$ with a submanifold $X$, the set of units, endowed with 
\begin{itemize}
\item two submersions $\sfs,\sft\,\colon\, \cG\,\rightarrow \,X$, called the source and target maps;
\item a diffeomorphism $\left\lbrace \cdot \right\rbrace^{-1} \,\colon\,\cG\,\rightarrow\,\cG $, called the inversion map (the notation $\tau\,\colon\,\cG\,\rightarrow\,\cG$ can be also used in some circumstances);
\item a composition law $\sfm\,\colon\, \cG_m \,\rightarrow \, \cG$, where $\cG_m\subset \cG^{(2)}=\left\lbrace (\beta,\alpha)\in \cG\times \cG \,\mid \, \sfs(\beta)=\sft(\alpha) \right\rbrace$ is an open neighborhood of the units in $\cG^{(2)}$, which satisfies the following properties:
\begin{itemize}
\item $\sfs(\sfm(\beta,\alpha))=\sfs(\alpha)$ and $\sft(\sfm(\beta,\alpha))=\sft(\beta)$
\item $\sfm(\alpha,\sfs(\alpha)) = \sfm(\sft(\alpha),\alpha) = \alpha$ (existence of units)
\item $\sfm(\alpha, \alpha^{-1}) = \sft(\alpha)$ and $\sfm(\alpha^{-1},\alpha)=\sfs(\alpha)$ (existence of inverses)
\item $\sfm(\gamma, \sfm(\alpha, \beta)) = \sfm(\sfm(\gamma,\beta),\alpha)$ whenever the two expressions make sense (local associativity)  
\end{itemize}
\end{itemize}
\end{definition} 
A local {Lie} groupoid is a \textit{local symplectic groupoid} $(\cG, \omega)$ if it is endowed with a symplectic structure and the graph $\Gamma$ of the multiplication map is a Lagrangian submanifold of the manifold $\overline{\cG} \times \cG \times \cG $ endowed with the sum of the symplectic structure. Here, $\overline{\cG}$ denotes the symplectic groupoid $(\cG, -\omega)$. The examples presented in the previous sections are examples of symplectic groupoids, which means $\cG_m = \cG^{(2)}$. 

The theory of local symplectic groupoids is closely related to the theory of symplectic realizations of Poisson manifolds (see Def. \ref{def.1}). In particular, a local symplectic groupoid always defines a strict full dual pair, where source and target maps are the corresponding symplectic realizations. As for the converse implication, the following result holds and will be used in the remainder of the section \cite{cdw-1987}
\begin{theorem}
Given a paracompact Poisson manifold $(X,\Theta)$ there is a local symplectic groupoid $(\cG,\omega)$, having $X$ as subset of the units.
\end{theorem}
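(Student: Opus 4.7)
The plan is to follow the Karasev–Weinstein / Coste–Dazord–Weinstein strategy, which I would streamline via the \emph{Poisson spray} construction. The cotangent Lie algebroid $(T^*X, \Theta^\sharp, [\cdot,\cdot]_A)$ recalled above provides the infinitesimal data; the task is to integrate it to a local Lie groupoid and equip the result with a compatible symplectic form. To construct $\omega$, I would choose a Poisson spray $V$ on $T^*X$, that is, a vector field which is fiberwise homogeneous of degree one and whose projection satisfies $\pi_* V_\xi = \Theta^\sharp(\xi)$, where $\pi\colon T^*X \to X$ is the bundle projection. Such a $V$ exists on any paracompact Poisson manifold by gluing local choices via a partition of unity; this is the only place where paracompactness is needed. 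Let $\phi_t$ denote the flow of $V$, defined on a neighborhood of the zero section for $|t|\le 1$, and let $\omega_{\mathrm{can}}$ be the canonical symplectic form on $T^*X$. Then set
\be
\omega = \int_0^1 \phi_t^* \omega_{\mathrm{can}}\, \dd t,
\ee
which is closed as an average of closed forms and reduces to $\omega_{\mathrm{can}}$ on the zero section, hence is nondegenerate on a (possibly smaller) open neighborhood $\cG \subset T^*X$ of $X$.

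Next, I would identify the groupoid structure on $(\cG, \omega)$. The zero section plays the role of the set of units, canonically identified with $X$; the source and target maps are given by $\sfs = \pi$ and $\sft = \pi\circ\phi_1$. Using the spray condition one shows that $\sfs$ is Poisson and $\sft$ is anti-Poisson onto $(X,\Theta)$, and that the $\sfs$- and $\sft$-fibers are symplectically orthogonal, producing a full dual pair in the sense of Definition \ref{def.1}. The multiplication $\sfm$ on a sufficiently small neighborhood $\cG_m$ of the units in $\cG^{(2)}$ is then uniquely determined by the requirement that its graph be Lagrangian in $\overline{\cG}\times\cG\times\cG$; equivalently, $\sfm(\beta,\alpha)$ is obtained by concatenating the characteristic trajectories of the spray associated with $\alpha$ and $\beta$. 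Inversion is defined by reversing the characteristics. One then verifies the local associativity and the unit/inverse axioms; these hold only on a possibly shrunken neighborhood of $X$, which is precisely why the construction yields merely a \emph{local} symplectic groupoid rather than a global one.

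The principal obstacle lies in showing that the spray-induced form $\omega$ is compatible with the groupoid operations: that $\sfs$ is Poisson and $\sft$ is anti-Poisson, and that the multiplication so defined has a Lagrangian graph. Both statements rest on a single identity relating the pullback $\phi_t^*\omega_{\mathrm{can}}$ to the anchor $\Theta^\sharp$, which can be established by differentiating with respect to $t$ and exploiting the spray condition together with the Cartan formula for the Lie derivative. Once this identity is in hand, the remaining checks — the Poisson property of $\sfs$, the anti-Poisson property of $\sft$, the symplectic orthogonality of the fibers, and the Lagrangian condition on the graph of $\sfm$ — follow uniformly, and the construction yields the desired local symplectic groupoid with $X$ as its submanifold of units.
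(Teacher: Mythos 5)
Your argument is correct in outline but follows a genuinely different route from the one the paper relies on. The paper takes this theorem from Coste--Dazord--Weinstein \cite{cdw-1987} and sketches their construction: cover $X$ by neighborhoods $U_j$, build a symplectic realization $\cG_j\simeq T^*U_j$ on each with the explicit coordinate expression \eqref{local_omega} coming from Weinstein's local normal form \cite{Weinstein83}, and glue these local pieces; paracompactness enters through the cover and the gluing. You instead run the Poisson-spray construction of Crainic--M\u{a}rcu\c{t} \cite{Crainic11}: paracompactness is used exactly once, to produce a global spray by a partition of unity, and the averaged form $\omega=\int_0^1\phi_t^*\omega_{\mathrm{can}}\,\dd t$ lives on a single neighborhood of the zero section, so no gluing or cocycle verification is needed. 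What your route buys is a cleaner global existence proof with a transparent role for paracompactness; what the paper's route buys is the explicit local formula \eqref{local_omega} for $\omega$, which is what all the subsequent computations in Sections \ref{localsympl} and \ref{locsympg} actually use. Two steps in your write-up carry more weight than you acknowledge: first, the claims that $\sft=\pi\circ\phi_1$ is anti-Poisson and that the $\sfs$- and $\sft$-fibers are symplectically orthogonal constitute the main content of the Crainic--M\u{a}rcu\c{t} theorem and require the full computation, not merely ``a single identity by Cartan's formula''; second, the passage from the resulting strict full dual pair to a local multiplication $\sfm$ with Lagrangian graph is itself a nontrivial uniqueness theorem of \cite{cdw-1987} and should be invoked explicitly rather than asserted. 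Neither point is a gap in the strategy, but both are load-bearing citations rather than routine verifications.
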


This local symplectic groupoid is constructed by properly gluing symplectic realizations $\cG_j\simeq T^*U_j$ of open neighborhoods $U_j$ of the Poisson manifold $X$. If $(x^{\mu}, p_{\mu})$ are functions that determine the local coordinates in $T^*U_j$, the symplectic form $\omega_j$ in this local realization has the following expression:
\begin{equation}\label{local_omega}
\omega_j= \overline{\gamma}^{\mu}_{\nu}(x,p) \mathrm{d}p_{\mu}\wedge\mathrm{d}x^{\nu} + \frac{1}{2} \overline{\gamma}^{\mu}_{\alpha}(x,p) \Theta^{\alpha \beta}(x) \overline{\gamma}^{\nu}_{\beta} (x,p)\mathrm{d}p_{\mu}\wedge\mathrm{d}p_{\nu}
\end{equation}
where $\overline{\gamma}$ is the Jacobian of the diffeomorphism which sends $(x,p)$ to the Darboux coordinates $(Q(x,p),p)$ (see \cite{Weinstein83} for details). Therefore, even if generically the symplectic groupoid $\cG\st X$ integrating the Poisson manifold $(X,\Theta)$ (whenever it exists \cite{Crainic-fernandes_2}) is not diffeomorphic to the cotangent bundle $T^*X$ of the manifold $X$, we have such a description at the local level. Adopting this point of view, we can generalize the construction presented in the previous sections to a generic Poisson manifold $(X,\Theta)$. 

Given a Lie groupoid, it is possible to define the \textit{right-invariant vector fields} on $\cG$ as the vector fields $Y\in \mathfrak{X}(\cG)$ tangent to the leaves of the source map that satisfy the condition
\be\label{riVF}
T_{\alpha}R_{\alpha^{-1}}(Y^{(R)}(\alpha)) = Y^{(R)}(s(\alpha))\,.
\ee
They are a natural generalization of the right-invariant vector fields associated with a Lie group. Analogously, it is possible to define left-invariant vector fields as those vector fields which are tangent to the leaves of the target map and satisfy the invariance property under the action from the left of the groupoid on itself. Right-invariant vector fields (as well as left-invariant vector fields) are determined by their values at the units. The restriction to the units of the tangent bundle to the leaves of the source map defines the so-called Lie algebroid of the Lie groupoid: for a symplectic Lie groupoid the corresponding Lie algebroid is the cotangent bundle of the Poisson manifold $X$. Therefore, given a section $A$ of the Lie algebroid there are two vector fields that can be associated with it: the left-invariant vector field $Y_A^{(L)}$ and the right-invariant vector field $Y_A^{(R)}$. The two vector fields satisfy the relation (see \cite{Mackenzie}) 
\[
Y_A^{(L)} = T\tau(Y_A^{(R)})\,.
\] 
Because the Lie algebroid of a symplectic groupoid $\cG\rightrightarrows X$ is the cotangent bundle $T^*X$, left-invariant (right-invariant) vector fields are associated with differential forms on $X$, say $\sfa$, via the relation \cite{cdw-1987}
\[
i_{Y^{(L)}_{\sfa}}\omega = s^*(\sfa)  \qquad \left( i_{Y^{(R)}_{\sfa}}\omega = -t^*(\sfa) \right).
\]
Due to the invariance of the vector field $Y^{(R)}_{\sfa}$, its flow $\varphi^t_{\sfa}\,\colon\,\cG\times \mathbb{R}\,\rightarrow\,\cG$  determines the action from the left of a bisection $\Sigma^{(\sfa)}$. Therefore, the map $\Sigma^{({\sfa})}_s\,\colon\,X\,\rightarrow\,\cG$ associated with the bisection $\Sigma^{(\sfa)}$ is expressed by the formula: 
\begin{equation}\label{exponential}
\Sigma^{({\sfa})}_s(x) = \varphi^1_{\sfa} (\sfi(x))\,,
\end{equation}
where $\sfi(x)$ denotes the unit at the point $x\in X$. The bisection defined by the above formula is called the \textit{exponential of the section} $\sfa$ and the exponential of the right-invariant vector field associated with the section $\sfa$ defines the \textit{exponential map} of the Lie algebroid.

 Analogously, the flow $\psi^t_{\sfa}$ of the left-invariant vector field can be identified with the action from the right of the bisection $\left( \Sigma^{({\sfa})} \right)^{-1}$. Due to these properties, we will use the notation $L_{\Sigma_{({\sfa})}}$ for the map $\varphi^1_{\sfa}\colon \cG\rightarrow \cG $ and $R_{\Sigma^{-1}_{({\sfa})}}$ for the analogous map $\psi^1_{\sfa}$ associated with the differential form ${\sfa}$. 
\begin{definition}
    Let $T^*X=\sfA \rightarrow X$ be the Lie algebroid associated with a Poisson manifold $(X,\Theta)$ and let $\cG\rightrightarrows X$ be a local symplectic groupoid. Let $\sfa\in \Gamma(\sfA)$ be a section of the algebroid. Therefore, there will be a right-invariant vector field $Y^{(R)}_{\sfa}$ and a left-invariant vector field $Y^{(L)}_{\sfa}$ on the local symplectic groupoid $\cG$ associated with $\sfa$. The bisection $\Sigma^{(\sfa)}$ generated by $\sfa$, also called the exponential of $\sfa$ is defined via the map $\Sigma^{({\sfa})}_s\,\colon\,X\,\rightarrow\,\cG$:
    $
    \Sigma^{({\sfa})}_s(x) = \varphi^1_{\sfa} (\sfi(x))\,,
    $
    The flow of the left-invariant vector field generates the right-action 
    $
    R_{\Sigma_{(\sfa)}^{-1}}\,\colon\,\cG\,\rightarrow\,\cG
    $
    of the bisection $(\Sigma^{(\sfa)})^{-1}$.
\end{definition}

When considering a local symplectic groupoid, the exponential map is defined for differential forms ${\sfa}$ close to the zero section in such a way that the corresponding flow remains in the domain of definition of the local composition. Since for local symplectic groupoids one can define local momenta, it is also possible to define the corresponding gauge-invariant momenta. 

Let ${\sfa}$ be the differential form that generates the bisection $\Sigma^{({\sfa})}$, and $(x^{j},p_{a})$ be local coordinate functions associated with a local trivialization of $T^*U_j$. The exponential map $R_{\Sigma^{-1}_{\sfa}}\,\colon\,\cG \,\rightarrow \,\cG$ is a bundle map which can be locally expressed using a second family of coordinate functions $(y^{j}, \pi_{a})$:
\bea
y^j &=& R^{y}_{\Sigma^{-1}_{\sfa}}(x) = y^j(x) \\
\pi_j &=& R^{\pi}_{\Sigma^{-1}_{\sfa}}(x,p) = \pi_j (x,p)\,,
\eea
where the pair of maps $\left\lbrace R^{y}_{\Sigma^{-1}_{\sfa}},\,R^{\pi}_{\Sigma^{-1}_{\sfa}} \right\rbrace$ is the coordinate representation of the bundle map $R_{\Sigma^{-1}_{\sfa}}$. The gauge invariant momenta are defined as the local functions $\Pi_a$ on $T^*U_j$ given by 
\begin{equation} \label{ginvm}
\Pi_j = \left( R_{\Sigma^{-1}_{\sfa}} \right)^*(\pi_j)\,.
\end{equation}
In fact, a gauge transformation is given by {the action from the right} of a Lagrangian bisection. Locally, {the left-action of the Lagrangian bisection $\Lambda_h$ is the exponential map of the right-invariant vector field $Y^{(R)}_{-\mathrm{d}h}$ }, because being Lagrangian implies the local exactness of the associated generating 1-form, i.e., a 1-form $\mathrm{d}h$ with $h\in C^{\infty}(X)$. This means that a gauge transformation is given by the exponential map of the left invariant vector field $Y^{(L)}_{\mathrm{d}h}$, that is
\begin{equation}
   R_{\Lambda_h}\,\colon\,\cG\,\rightarrow\,\cG\,.
\end{equation}
However, a gauge transformation also modifies the bisection $\Sigma^{(\sfa)}$ which is mapped to the bisection $\Sigma_A\circ \Sigma_{\mathrm{d}h}$. Then, it is straightforward to notice that the function $\Pi_j$ is mapped to 
\begin{equation}
\left(R_{\Sigma^{-1}_{\sfa}}\right)^*\circ\left(R_{\Lambda_h^{-1}}\right)^*(\left(R_{\Lambda_h}\right)^*\pi_{j})= \left(R_{\Sigma^{-1}_{\sfa}}\right)^*(\pi_{j}) = \Pi_j\,,
\end{equation}
showing that it is gauge invariant. Since the map $R_{\Sigma_{\sfa}^{-1}}$ is a bundle map, it induces a diffeomorphism on $X$ which we call $r_{\Sigma_{\sfa}^{-1}}$. The following diagram pictures it:

\be
\xymatrix{ \cG \ar[r]^{R_{\Sigma_{\sfa}^{-1}}} \ar[d]_{\sfs} & \cG \ar[d]^{\sfs} \\ X \ar[r]_{r_{\Sigma_{\sfa}^{-1}}} & X }
\ee

By the property of the right action, the submanifold $\pi_j=0$ represents the image of the bisection $\Sigma^{(\sfa)}$ under the right action $R_{\Sigma_{\sfa}^{-1}}$. In fact, let $\Sigma_s^{(\sfa)}\,\colon\,X\,\rightarrow\,\cG$ be the map in Eq.\eqref{exponential}. Then we have
\begin{equation}\label{zero_section}
(\Sigma_s^{(\sfa)})^*(\Pi_a)(x)=\Pi_a(\Sigma_A(x))=\pi_a(\Sigma_A(x))=R^{\pi}_{\Sigma^{-1}_A}(\phi^{(L)}_A(x,0))=0
\end{equation}

\begin{example}
As an example, we can consider the above formulas in the case of the constant non-commutativity. Given a local differential form $\sfa =\sfa_{j}(x)\dd x^j$ on $X$, the corresponding right-invariant and left-invariant vector fields are:
\begin{eqnarray}
& Y^{(L)}_{\sfa} = -({\sfa}_j(x)\partial_p^j + \Theta^{kj}{\sfa}_k(x)\partial_j) \\
& Y^{(R)}_{\sfa} = {\sfa}_j(t(x,p))\partial_p^j\,.
\end{eqnarray}
The bisection generated by the section of the algebroid $\sfa$ is  $\Sigma_{\sfa} = (x^j, A_j(x))$, where 
\[
A_j(x)=\int_{0}^1{\sfa}_j(y(\sigma))\,\dd \sigma
\]
and $y(\sigma)$ is the solution of the equations of motion
\[
\dot{y}^j = \Theta^{jk}{\sfa}_k(y)
\]
with the initial conditions $y^j(0)=x^j$. {Then, the action from the right} of the inverse bisection $\Sigma^{-1}_{\sfa}$ is given by the map $R_{\Sigma^{-1}_{\sfa}}$ which in local coordinates can be written as 
\begin{eqnarray}
y^j &=& R^{y}_{\Sigma^{-1}_{\sfa}}(x) = y^j(x) = x^j+\Theta^{ij}A_j(x) \\
\pi_j &=& R^{\pi}_{\Sigma^{-1}_{\sfa}}(x,p) = \pi_j (x,p)=p_j-A_j(x) \,.
\end{eqnarray}
Therefore, the gauge invariant momenta are expressed by the function $\Pi_j(x,p)=p_j-A_j(x)$. 
\end{example}

\begin{example}
As a second example, let us consider the case of the symplectic groupoid associated with a Lie-algebra type non-commutativity. The local symplectic groupoid is $\cG = \fgs \times \fg$ which is diffeomorphic to  $\fgs \times G$ via the exponential map. A section of the Lie algebroid is the 1-form ${\sfa} ={\sfa}_j(x)\dd x^j = {\sfa}_j(x) t^j$ on $\fgs$, where we are identifying a section of $T^{\ast}\fgs = \fgs \times \fg$ with a Lie-algebra valued function via the choice of a basis $t^j = dx^j$. 

Using the results in section \ref{sec_1-2} one can show that the right-invariant and the left invariant vector fields associated with the 1-form ${\sfa}$ are the vector fields
\bea
Y^{(L)}_{\sfa} &=& {\sfa}_j(x){\gamma}^j + f_l^{jc} x^l {\sfa}_j(x) \partial_c \\
Y^{(R)}_{\sfa} &=& -{\sfa}_j(\sft(x,p))\overline{\rho}^j\,.
\eea
Therefore, the flow of {the right-invariant vector fields is related to the left-action} of the group on itself, and the bisection $\Sigma^{(\sfa)}$ generated by the section $\sfa$ is expressed via the map 
\be
\Sigma_{s}^{(\sfa)}(x) = (x, \mathcal{A}(x)) = (x, \mathrm{exp}(t^j A_j(x)))\,,
\ee
where the local function $A_j$ is 
\be
\begin{split}
& A_j(x)= -\int_0^1 {\sfa}(y(s)) \,\dd s  \\
& \mathrm{with}\quad y(s) = \mathrm{Fl}_V(s,x)\,
\end{split}
\ee
{
and analogously we can define the flow of the left invariant vector fields, that generate the right action of the group on itself.}
The map $\mathrm{Fl}_V\,\colon\,\mathbb{I} \times X\,\rightarrow\, X$ is the local flow of the vector field $V = \Theta^{ab}(x){\sfa}_{b}(x)\partial_a$. Now it is straightforward to obtain the expression for the gauge-invariant momenta, using the right action of the inverse bisection $\Sigma_{\sfa}^{-1}$:
\bea
& R_{\Sigma_{\sfa}^{-1}}\,\colon\,T^{\ast}\fgs \,\rightarrow\,T^{\ast}\fgs\\
& y^j = R_{\Sigma_{\sfa}^{-1}}^{y}(x) = \mathcal{A}^{-1}(x)x\mathcal{A}(x) \\
& \pi_j = R_{\Sigma_{\sfa}^{-1}}^{\pi} (x,p) = (-A_j(x))\oplus p_j\,,
\eea
so that the gauge invariant momenta are the functions $\Pi_j(x,p) = (-A_j(x)) \oplus p_j$.
\end{example}

\section{The field strength of Poisson electrodynamics
{}}\label{sec3}
In this section we are going to introduce the field strength(s) of Poisson electrodynamics. According to \cite{Kupriyanov:2023qot} in the symplectic groupoid formalism one may define either gauge covariant or gauge invariant closed two forms which furnish equally legitimate generalizations of  the field strength of standard Maxwell theory. They are obtained as the pull-back of the symplectic form on the groupoid by means of the two maps associated with the bisection, through the source and target maps. These definitions are a natural generalization of the standard Faraday field, $F=dA$ as they both reduce to the latter when the base manifold is the usual space-time with zero Poisson bracket (see \cite{Kupriyanov:2023qot, DiCosmo:2023wth} for details).  Hence, we have for the covariant one,
\begin{equation}\label{Fs}
F^s=F^s(\Sigma)=\Sigma^\ast_s\,\omega\,.
\end{equation}
Taking into account the local expression of the symplectic form, Eq. (\ref{local_omega}), the expression in local coordinates reads \cite{Kupriyanov:2023qot,DiCosmo:2023wth},
\begin{equation}\label{flc}
F^s=\frac12\left(\bar\gamma_a^i(A,x)\,\partial_b A_i-\bar\gamma_b^i(A,x)\,\partial_a A_i-\bar\gamma_k^i(A,x)\,\Theta^{km}\,\bar\gamma_m^j(A,x)\,\partial_aA_i\,\partial_bA_j  \right)\dd x^a\wedge \dd x^b\,
\end{equation}
and its  gauge covariance is easily verified,
\begin{eqnarray}
F^s\to\left(F^s\right)^\prime=F^s\left(\Sigma\circ\Lambda\right)= (\Sigma \circ \Lambda)_s^*(\omega) = (R_{\Lambda}\circ\Sigma_s\circ l_{\Lambda})^*(\omega) = l_\Lambda^\ast\,F^s\,.\notag
\end{eqnarray}
On using the other map associated with the bisection through the target map, we obtain the gauge-invariant field strength
\begin{equation}\label{Ft}
F^t=F^t(\Sigma)=\Sigma^\ast_t\,\omega\,,
\end{equation}
whose invariance is immediately checked, since 
\be
(\Sigma\circ\Lambda)_t = (R_{\Lambda}\circ \Sigma_t)
\ee
and the map $R_{\Lambda}$ is a symplectomorphism, if $\Lambda$ is a Lagrangian bisection. 
On using Eq.  (\ref{l-r}) the covariant and invariant tensors are related by
\begin{equation}
F^t=r_\Sigma^\ast\,F^s\,,\qquad \mbox{and}\qquad   F^s=l_\Sigma^\ast\,F^t\,
\end{equation}
so that the expression of $F^t$ in local coordinates can be obtained by performing the change of variables $x\to \tilde x=r_\Sigma(x)$ in (\ref{flc}). 

As already mentioned in the introduction, another natural candidate for the generalization of the Faraday tensor was introduced prior to \eqn{Fs}, \eqn{Ft} in \cite{Kupriyanov:2021aet}. This is denoted by $\cal{F}$ and it is given by   Eq. \eqn{calligF}.{The matrix $\rho(x,p)$, that was introduced in the definition of  $\cal{F}$ to recover gauge covariance, has to satisfy a compatibility condition with $\gamma(p)$ and with the Poisson tensor $\Theta$, which reads \cite{Kupriyanov:2021aet}
\be\label{secondmaster}
\gamma_k^a\del_p^k\rho_b^m+\rho_b^k\del_p^m\gamma_k^a+\Theta^{ak}\del_k\rho_b^m=0
\ee
For constant noncommutativity $\rho$ is simply the identity, while for Lie algebra-type noncommutativity Eq. \eqn{secondmaster} reduces to Eq. \eqn{rhoeq} (with $\rho$ independent from $x$ variables), thus relating $\rho(p)$ to left invariant forms of $G$. 
In the coming sections we shall see how these definitions are connected to each other.}


\subsection{Canonical non-commutativity, \texorpdfstring{$\Theta=const$}{}}
According to Eq. (\ref{gicmc}) the gauge invariant momenta in this case read, 
\begin{equation}
\pi_a=p_a-A_a(x)\,.
\end{equation}
For canonical non-commutativity, $\rho(p)$ is equal to the identity, therefore  one has \cite{Kupriyanov:2020sgx},
\begin{equation}\label{pbc}
{\cal F}_{ab}(x)=F_{ab}(x):=\{\pi_a,\pi_b\}\mid_{\pi=0}=\partial_aA_b-\partial_bA_a+\{A_a,A_b\}\,.
\end{equation}
On the other hand the symplectic two form $\omega(x,p)$ is given by Eq. (\ref{omegac}). To relate it to the Poisson brackets (\ref{pbc}) one first  needs to perform the change of variables defined by the diffeomorphism $R_{\Sigma_{A}}$, that is, 
\begin{equation}\label{opc}
\omega_A:=(R_{\Sigma_A})^*\omega=\frac12\,R_{ij}(y)\,\dd y^i\wedge \dd y^j+\dd \pi_j\wedge \dd y^j + \frac12 \Theta^{ij}\dd\pi_i\wedge \dd\pi_j\,,
\end{equation}
where $R_{ij}$ coincides with $F^t_{ij}$, once evaluated at $\pi=0$. 

The symplectic form $\omega_A$ defines a new Poisson tensor,  $\omega^{-1}_A:=\left\lbrace \cdot , \cdot \right\rbrace_A$, yielding  for  the Poisson brackets of gauge invariant momenta 
\begin{eqnarray}\label{pbc1}
\widehat{\mathcal{F}}_{ab}(y):=\left\{\pi_a,\pi_b\right\}_A\mid_{\pi= 0}   =\left[ \left(\mathbb{I} + F^t(y) \Theta  \right)^{-1} F^t(y)\right]_{ab}\,,
\end{eqnarray}
Finally, to relate the expression (\ref{pbc1}) to (\ref{pbc}) we need to return to the original variables $x=r_{\Sigma}(y)$,
\begin{equation}\label{pbc2}
\begin{split}
\mathcal{F}_{ab}(x)=r_{\Sigma^{-1}}^*(\widehat{\mathcal{F}}_{ab}(y)) =\left[\bar J F^s\bar J^T\left(\mathbb{I}+\Theta\bar JF^s\bar J^T\right)^{-1}\right]_{ab}(x) \,,
\end{split}
\end{equation}
where the Jacobian matrix is given by
\begin{equation}
J^j_i=\partial_i r_{\Sigma^{-1}}(x)=\delta^j_i-\partial_iA_l\,\Theta^{lj}\,,\qquad\mbox{and}\qquad \bar J(x)=J^{-1}(x)\,.
\end{equation}
Eq. \eqn{pbc2} yields the wanted relation between $\mathcal{F}$ and the covariant field strength $F^s$.

The same logic may be used to express $F^t$ in terms of $\mathcal{F}$. In this case we have to invert the matrix of Poisson brackets $\{y^i,y^j\}$, $\{y^i,\pi_j\}$ and $\{\pi_i,\pi_j\}$ and compare it to the coefficients of the symplectic two form (\ref{opc}). Therefore, we find
\begin{equation}
F^t(\partial_a,\partial_b)=\left[ \left(\mathbb{I} - \widehat{\mathcal{F}} \Theta  \right)^{-1} \widehat{\mathcal{F}} \right]_{ab}\,,
\end{equation}
yielding after the change of variables,
\begin{equation}
F^s(x)=\left(\mathbb{I}-\bar U^T{\cal F}\bar U\Theta\right)^{-1}\bar U^T{\cal F}\bar U\,,
\end{equation}
where
\begin{equation}
U^i_j=\{x^i,\pi_j\}=\delta^i_j-\Theta^{il}\partial_lA_j\,,\qquad \bar U=U^{-1}.
\end{equation}

\subsection{Lie-algebra-type non-commutativity, \texorpdfstring{$\Theta^{ij}(x)=f^{ij}_k\,x^k$}{}.}
From Eqs. \eqn{Fs}, \eqn{Ft} the covariant and invariant tensors for Lie algebra type noncommutativity read respectively
\bea
F^s &=&\dd\langle x,\dd{\cal A}(x)\,{\cal A}^{-1}(x)\rangle\\
F^t &=& \dd\langle\tilde x,\dd{\cal A}(\tilde x)\,{\cal A}^{-1}(\tilde x)\rangle=\dd\left\langle r_\Sigma(x),\dd{\cal A}\left(r_\Sigma(x)\right){\cal A}^{-1}\left(r_\Sigma(x)\right)\right\rangle\,.
\eea
The gauge invariant momenta were determined in (\ref{gicml})  as $h={\cal A}^{-1}(x)\cdot g$. On applying the change of variables defined by the diffeomorphism $R_{\Sigma}$, i.e.,
\be
g=\mathcal{A}(r_{\Sigma}(y))\cdot h ,\qquad x = r_{\Sigma}(y)
\ee
to the symplectic structure $\omega=\dd\vartheta(x,g)$, we find
\begin{eqnarray}
 \omega_h:=(R_{\Sigma})^*(\omega) = \dd\langle r_{\Sigma}(y), \dd \left({\cal A}\cdot h\right)\left({\cal A}\cdot h\right)^{-1}\rangle=\dd\langle y,\dd h\,h^{-1}\rangle+ F^t \,,
\end{eqnarray}
where we recall that $r_{\Sigma}(y) =: x$ satisfies $y = \mathcal{A}^{-1}(x)x\mathcal{A}(x)$. The bi-vector $\omega_h^{-1}$ determines the Poisson brackets between local coordinates $(y^i,\pi_i)$ with $h=\exp(t^i\,\pi_i)$ and $y=t^\ast_i\,y^i$, $t^i, t^\ast_i$ respectively being the generators in the Lie algebra and its dual. Then we calculate
\begin{equation}\label{inomega}
\omega_h^{-1 }=  \left(
\begin{array}{cc}
B & C \\
-C^T &D
\end{array}
\right),
\end{equation}
where
\begin{eqnarray}
B(y)&=&  \big[\mathbb{I}+\Theta(y)F^t(y)\big]^{-1}\Theta(y) \,,\\
C(y,\pi)&=&  \big[\mathbb{I}+\Theta(y) F^t(y)\big]^{-1}\gamma(\pi)\,,\notag\\
D(y,\pi)&=& \gamma^T(\pi) \big[\mathbb{I} +  F^t(y)\Theta(y)\big]^{-1}F^t(y)\, \gamma(\pi)\,.\notag
\end{eqnarray}
In particular,
$
\left\{\pi_a,\pi_b\right\}_A=D_{ab}(y,\pi)\,. 
$ Therefore, we have 
\begin{equation}\label{calFFs}
\widehat{{\cal F}}_{ab}(y)=\{\pi_a,\pi_b\}|_{\pi=0}= \big[\mathbb{I} +  F^t(y)\Theta(y)\big]^{-1}F^t(y)\,,
\end{equation}
where we have used  $\gamma(0)=\mathbb{I}$. Namely we get the same result as in the previous case of  canonical non-commutativity (\ref{pbc1}).

On the other hand, the momenta $\pi_i(p,A)$ are gauge invariant by construction, namely $\delta_f\,\pi_i=0$. The latter  implies \cite{BKK},
\begin{equation}\label{mepi1}
\gamma^m_k(p)\,\partial^k_p\pi_i(p,A)+\gamma^m_k(A)\,\partial^k_A\pi_i(p,A)=0
\end{equation}
with $ \pi_i(p,A)=p_i-A_i(x)+{\cal O}(\Theta)\,.$
Therefore, using the Poisson brackets (\ref{sr})
 we find
\begin{eqnarray}\label{pbl2}
\{\pi_a(p,A),\pi_b(p,A)\}(x,p)&=&\partial_A^m \pi_a\,\partial_A^n\pi_b\,F_{mn}(x,p)\,,
\end{eqnarray}
with $F$ now given by
\begin{equation}\label{Fstr}
F_{mn}=\gamma_m^l(p)\,\partial_lA_n-\gamma_n^l(p)\,\partial_lA_m+\{A_m,A_n\}\,.
\end{equation}
In addition, one may easily check that
\begin{equation}\label{defrho}
\left.\partial_A^m \pi_a\right|_{p=A}=-\rho^m_a(A)\,.
\end{equation}
Indeed, by  differentiating Eq. (\ref{mepi1}) with respect to $\partial_A^j$ and then setting $p_i=A_i$ one sees that $\left.\partial_A^m \pi_a\right|_{p=A}$ satisfies Eq. (\ref{eqrho}) which in turn defines $\rho^m_a(A)$.
The latter means that \cite{Kupriyanov:2021aet}
\begin{eqnarray}\label{Fcalpi}
{\cal F}_{ab}(x)=\rho_a^m(A)\,\rho_B^n(A)\,F_{mn}(x,A)=\{\pi_a(p,A),\pi_b(p,A)\}_{p=A}\,.
\end{eqnarray}

Again, to be able to relate the expressions (\ref{Fcalpi}) and (\ref{calFFs}) one makes the change of variables $y\to x$ in (\ref{calFFs}). Introducing the Jacobian matrix of the diffeomorphism $l_{\Sigma} = r_{\Sigma^{-1}}$,
\begin{eqnarray}
J^j_i(x)=\partial_i r^j_{\Sigma^{-1}}(x)=\partial_i\left(\Delta^j_l(A)\,x^l\right)\,,
\end{eqnarray} 
one obtains the following relation
\begin{equation}\label{FFs relation}
\mathcal{F}_{ab}(x) =r_{\Sigma^{-1}}^*(\widehat{\cal F}_{ab}(y))= \left[\left(\mathbb{I}+\bar JF^s\bar J^T\Theta\left(r_{\Sigma^{-1}}(x)\right)\right)^{-1}\bar JF^s\bar J^T\right]_{ab}(x)\,,
\end{equation}
where $\Theta^{ij}\left(r_{\Sigma^{-1}}(x)\right)=f^{ij}_k\,\Delta^k_l(A)\,x^l$ and $\bar J=J^{-1}(x)$.
Before concluding this section, an important observation emerges. Because of the relationships established by Eqs. {\eqn{pbc1}, \eqn{pbc2},\eqn{FFs relation},  
\begin{equation}\label{FsFcal}
F^s=0\qquad \Leftrightarrow \qquad F^t=0\qquad \Leftrightarrow \qquad{\cal F}=0\,.
\end{equation}
Namely, the vanishing of one of the generalizations of the {Faraday} tensor discussed so far is equivalent to the vanishing of all the others. 

\subsection{Local symplectic groupoids}\label{locsympg}

In this final part, we are going to extend the previous results to the case of local symplectic groupoids. We are interested in the relation between the covariant (invariant) tensor $F^s$ ($F^t$) and $\mathcal{F}$, the latter being given by  the Poisson brackets between the gauge-invariant momenta.
Since this  is a local relation, we will use the local expression of the 2-form $\omega$ given in equation Eq.\eqref{local_omega}.

We will proceed in two steps: firstly we will show the relation between the 2-form $R_{\Sigma_{\sfa}}^*(\omega)$ and the gauge-invariant field strength $F^t$. Secondly, by inverting the 2-form $R_{\Sigma_{\sfa}}^*(\omega)$, we will obtain the desired relation between $\mathcal{F}$ and $F^s$.

Let $\sfa$ be a section of the Lie algebroid $T^*X$ of the local symplectic groupoid $\cG\rightrightarrows X$ and  let us choose a suitable basis of vector fields and differential forms in $R_{\Sigma_{\sfa}^{-1}}(T^*U_j)$. A direct consequence of the multiplication rule between bisections is the fact that 
\be
R_{\Sigma_{\sfa}} \circ i_0   = \Sigma^{(\sfa)}_s \circ r_{\Sigma_A}  
\ee
which can be pictured via the following diagram
\be
\xymatrix@C+25pt@R+20pt{ 
\cG  \ar@<0.5ex>[d]^{\sfs} & \cG \ar[l]_{R_{\Sigma_{\sfa}}} \ar@<0.5ex>[d]^{\sfs}  \\ X \ar@<0.5ex>[u]^{\Sigma^{(\sfa)}_s}  &  X \ar[l]^{r_{\Sigma_{\sfa}}} \ar@<0.5ex>[u]^{i_0}
}
\ee
\noindent In the above expression, $i_0\,\colon\,X\,\rightarrow\,\cG$ represents the embedding of $X$ in $\cG$ as the zero section, i.e. the identity section of the local groupoid.
As a consequence, the vector fields in the kernel of the differential forms $\mathrm{d}\pi_j$ are tangent to the zero section of $R_{\Sigma_{\sfa}^{-1}}(T^*U_j)$. Consequently, the kernel of the differential forms $\dd \Pi_j$ (where $\Pi_j$  the gauge-invariant momenta given by \eqn{ginvm}) is tangent to the image of the bisection $\Sigma_A$. Let us introduce the following bases of differential forms and vector fields on $R_{\Sigma_A^{-1}}(T^*U_j)$: 
\begin{eqnarray}
E_{j}^{(p)} & = \dd \pi_j \qquad E_{(x)}^k & = s^*(\dd y)\\
V_{(\pi)}^j & = \partial_{\pi}^j \qquad V_{k}^{(y)}& =\partial_k^y\,,
\end{eqnarray}
where, with an abuse of notation we have used the symbols $\left\lbrace y^j \right\rbrace$ also for the local coordinates on $X$. These bases  are dual to each other. Therefore, the vector fields $\left(R_{\Sigma_{\sfa}^{-1}}\right)^*(V_{k}^{(y)})=\hat{V}_{k}^{(x)}$\footnote{See \cite{amr} for the definition of pullback of vector fields under the action of a diffeomorphism.} are in the kernel of the differential forms $\dd \Pi_j$, thus tangent to the image of the bisection $\Sigma^{(\sfa)}$. Moreover, they are projectable vector fields onto the vector fields $\hat{X}_j =\left(r_{\Sigma_{\sfa}^{-1}}\right)^* \left( \frac{\partial}{\partial y^{j}} \right)$. 

If $\Sigma^{(\sfa)}_s\,\colon\, X\,\rightarrow\,\cG$ is one of the maps associated with the bisection generated by the section $\sfa$ of the algebroid $T^*X$, we have that the vector fields $\hat{X}_j$ and $\hat{V}_j^{(x)}$ are $\Sigma_s^{(\sfa)}$-related. In fact, the following chain of equality holds:
\be
T_{r_{\Sigma_{\sfa}}(y)}(\Sigma_s^{(\sfa)}) \circ  T_y r_{\Sigma_{\sfa}} \left( \frac{\partial}{\partial y^{j}} \right) = T_y (\Sigma_s^{(\sfa)}\circ r_{\Sigma_{\sfa}}) \left(\frac{\partial}{\partial y^{j}} \right) = T_{i_0(y)} (R_{\Sigma_{\sfa}}) \circ T_y i_0 \left( \frac{\partial}{\partial y^{j}} \right)\,.  
\ee
Therefore, we have that:
\be\label{pullback}
\begin{split}
F^t = (r_{\Sigma}^*F^s)_{jk} =& (r_{\Sigma}^*F^s) \left( \frac{\partial}{\partial y^j}, \frac{\partial}{\partial y^k} \right)= (r_{\Sigma})^*\left( (\Sigma^*\omega)((\hat{X}_j, \hat{X}_k)) \right) =r_{\Sigma}^*\left( \Sigma^*(\omega(\hat{V}^{(x)}_j, \hat{V}^{(x)}_k)) \right) = \\  
& = (i_0)^*\left(R_{\Sigma}\right)^*\left( \omega(\hat{V}^{(x)}_j, \hat{V}^{(x)}_k) \right)= (i_0)^*\left( \left( R_{\Sigma} ^*\omega \right) \left( V_j^{(y)}, V_k^{(y)} \right) \right) \,.
\end{split}
\ee
We can summarize the previous arguments with the following lemma:
\begin{lem}
Let $X$ be a Poisson manifold with Poisson tensor $\Theta$, and let $\cG\rightrightarrows X$ be a local symplectic groupoid, with symplectic 2-form $\omega$. Then we have that
\[
F^t(\partial_j , \partial_k) = (i_0)^*\left( \left( R_{\Sigma} ^*\omega \right) \left( V_j^{(y)}, V_k^{(y)} \right) \right) \,.
\]
\end{lem}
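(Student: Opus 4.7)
The plan is to formalize the chain of equalities displayed just above the lemma statement. Two ingredients need to be kept distinct: (a) the identity $F^t = r_{\Sigma}^{\ast} F^s = r_{\Sigma}^{\ast}(\Sigma_s^{(\sfa)})^{\ast}\omega$ proved earlier in Section~\ref{sec3}; and (b) the diagram $R_{\Sigma_\sfa}\circ i_0 = \Sigma_s^{(\sfa)}\circ r_{\Sigma_\sfa}$, which relates the embedding of $X$ as the bisection to the embedding as the zero section after right-translation by $\Sigma_\sfa$. Together they say that the two sides of the claimed identity encode the same composition of maps, read from opposite directions.

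First, I would unwind the right-hand side from the inside out. Because $R_{\Sigma_\sfa}$ is a (local) diffeomorphism of $\cG$, the vector fields $\hat V_j^{(x)} = (R_{\Sigma_\sfa^{-1}})^{\ast} V_j^{(y)}$ are intrinsically defined on the domain, and naturality of the pullback yields
\[
(R_{\Sigma_\sfa}^{\ast}\omega)\bigl(V_j^{(y)}, V_k^{(y)}\bigr) = \bigl[\omega\bigl(\hat V_j^{(x)}, \hat V_k^{(x)}\bigr)\bigr]\circ R_{\Sigma_\sfa}.
\]
Applying $i_0^{\ast}$ and then rewriting $R_{\Sigma_\sfa}\circ i_0$ through the diagram produces $\bigl[\omega(\hat V_j^{(x)}, \hat V_k^{(x)})\bigr]\circ \Sigma_s^{(\sfa)}\circ r_{\Sigma_\sfa}$.

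Next, I would exploit the already-noted fact that each $\hat V_j^{(x)}$ lies in the kernel of every $\dd \Pi_k$ and is therefore tangent to the image of $\Sigma^{(\sfa)}$, hence projectable under $\Sigma_s^{(\sfa)}$. The tangent-map identity
\[
T_{r_{\Sigma_\sfa}(y)}\Sigma_s^{(\sfa)}\circ T_y r_{\Sigma_\sfa}\bigl(\partial/\partial y^j\bigr) = T_{i_0(y)}R_{\Sigma_\sfa}\circ T_y i_0\bigl(\partial/\partial y^j\bigr)
\]
identifies this projection with $\hat X_j = (r_{\Sigma_\sfa^{-1}})^{\ast}\partial_j$. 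Substituting back, the right-hand side becomes $\bigl[(\Sigma_s^{(\sfa)})^{\ast}\omega\bigr]\bigl(\hat X_j, \hat X_k\bigr)$, and this in turn equals $r_{\Sigma_\sfa}^{\ast}\bigl[F^s(\partial_j, \partial_k)\bigr] = F^t(\partial_j, \partial_k)$ by the definition of $F^s$ in~\eqref{Fs} and the relation $F^t = r_\Sigma^{\ast} F^s$ already established in the text.

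The main obstacle is not the calculation itself but the clean formulation of the projection step: one has to verify that $\hat V_j^{(x)}$ projects along $\Sigma_s^{(\sfa)}$ to a uniquely defined vector field on $X$, and that this projection coincides with $\hat X_j$ rather than differing by something in the kernel of $T\Sigma_s^{(\sfa)}$. Both points are handled by the tangency of $\hat V_j^{(x)}$ to the bisection together with the diagrammatic identity; once these are settled, the remainder of the argument is pure naturality of the pullback.
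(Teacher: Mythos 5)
Your argument is correct and is essentially the paper's own proof read in the opposite direction: you use exactly the same two ingredients — the commuting square $R_{\Sigma_{\sfa}}\circ i_0=\Sigma_s^{(\sfa)}\circ r_{\Sigma_{\sfa}}$ and the $\Sigma_s^{(\sfa)}$-relatedness of $\hat V_j^{(x)}$ with $\hat X_j$ — combined with $F^t=r_\Sigma^*F^s$ and naturality of the pullback, which is precisely the chain of equalities in Eq.~\eqref{pullback}. The only difference is that you unwind the right-hand side toward $F^t$ rather than starting from $F^t$, which is a presentational choice, not a different proof.
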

The next step consists in expressing the bracket between the invariant momenta in terms of $F^t$. Using the coordinate expression of the 2-form $\omega$, we can write the 2-form $R_{\Sigma_A}^*\omega$ as follows:
\be
R_{\Sigma_A}^*\omega = \frac{1}{2} R_{lm}\dd y^l\wedge \dd y^m + \Gamma_l^m \left( \dd \pi_m \otimes \dd y^l - \dd y^l \otimes \dd \pi_m \right) + \frac{1}{2} \Xi^{lm}\dd \pi_l \wedge \dd \pi_m\,,
\ee
and  Eq.\eqref{pullback} yields
\be 
R_{lm}\mid_{\pi=0} = (r_{\Sigma_A}^*F^s)_{lm}
\ee 
Then, on using the Jacobian matrix of the diffeomorphism $R_{\Sigma_A}$,  $\Gamma_l^m $ and $\Xi^{lm}$ are respectively given by
\bea
\Gamma_l^m &=& \overline{\gamma}_b^a \frac{\partial x^b}{\partial y^l}\frac{\partial p_a}{\partial \pi_m} + \frac{1}{2} \overline{\gamma}_j^a \Theta^{jk}\overline{\gamma}^b_k\left( \frac{\partial p_a}{\partial y^l}\frac{\partial p_b}{\partial \pi_m} - \frac{\partial p_a}{\partial \pi_m}\frac{\partial p_b}{\partial y^l} \right)\\
\Xi^{lm} &=& \bar{\gamma}^a_j\Theta ^{jk}\bar{\gamma}^b_k\frac{\partial p_a}{\partial \pi_l}\frac{\partial p_b}{\partial \pi_m}\,.
\eea
Since $R_{\Sigma_A}$ is a bundle morphism, the blocks $\frac{\partial x^a}{\partial y^l}$ and $\frac{\partial p_a}{\partial \pi_m}$ are invertible, showing that $\Gamma_m^l$ is an invertible block of the coordinate representation of $R_{\Sigma_A}^*\omega$.
{
The last step of the proof consists, now, in inverting the 2-form $R^*_{\Sigma_A}\omega$ to get the functions $\left\lbrace \pi_j , \pi_k\right\rbrace\mid_{\pi=0}$. A straightforward computation shows that
\begin{equation}\label{relation}
\widehat{\mathcal{F}}_{ab}(y) = \left[ \left( \mathbb{I} +  \overline{\Gamma}^T(F^t)\overline{\Gamma} \Xi \right)^{-1} \overline{\Gamma}^T(F^t)\overline{\Gamma} \right]_{ab}(y)\,,
\end{equation}
where $\overline{\Gamma} = \Gamma^{-1}$.
}

{
Analogously one obtains the inverse relation:
\begin{equation}
F^t(\partial_a,\partial_b)=F_{ab}(y) =  \left[ (\mathbb{I} - \Gamma^T \widehat{\mathcal{F}}\Gamma\,\widehat{\Xi})^{-1} \Gamma^T\widehat{\mathcal{F}}\Gamma \right]_{ab}(y) \,,
\end{equation}
where $\widehat{\Xi} = \overline{\Gamma}\,\Xi \, \overline{\Gamma}^T$.
}

{
Let us remark that this result includes those of the previous two sections. Indeed, for canonical non-commutativity one has
\begin{equation}
    \Gamma_j^k = \delta_j^k \quad \textnormal{and} \quad \Xi^{jk}=\widehat{\Xi}^{jk} = \Theta^{jk}
\end{equation}
whereas, for Lie-algebra type non-commutativity one has
\begin{equation}
    \Gamma^j_k=\bar{\gamma}_k^j \quad \textnormal{and} \quad \widehat{\Xi}^{jk} = \Theta^{jk}\,.
\end{equation}
}

To summarise, we have proven the following:
\begin{proposition}
Let $X$ be a Poisson manifold with Poisson tensor $\Theta$, and let $\cG\rightrightarrows X$ be a local symplectic groupoid, with symplectic 2-form $\omega$. Under the right action of a bisection $\Sigma_A$ we obtain that
\[
R_{\Sigma_A}^*\omega = \frac{1}{2} R_{lm}\dd y^l\wedge \dd y^m + \Gamma_l^m \left( \dd \pi_m \otimes \dd y^l - \dd y^l \otimes \dd \pi_m \right) + \frac{1}{2} \Xi^{lm}\dd \pi_l \wedge \dd \pi_m\,.
\]
If $\widehat{\mathcal{F}}_{jk}(y)=\left\lbrace \pi_j,\pi_k \right\rbrace_A\mid_{\pi=0}(y)$ and $F^t=\Sigma_t^*(\omega)$, we have that:
\begin{eqnarray}
    &\widehat{\mathcal{F}}_{ab}(y) = \left[ \left( \mathbb{I} +  \overline{\Gamma}^T(F^t)\overline{\Gamma} \Xi \right)^{-1} \overline{\Gamma}^T(F^t)\overline{\Gamma} \right]_{ab}(y)\,,\\
    &F^t(\partial_a,\partial_b)=F^{t}_{ab}(y) =  \left[ (\mathbb{I} - \Gamma^T \widehat{\mathcal{F}}\Gamma\,\widehat{\Xi})^{-1} \Gamma^T\widehat{\mathcal{F}}\Gamma \right]_{ab}(y) \,,
\end{eqnarray}
where $\widehat{\Xi} = \overline{\Gamma}\,\Xi \, \overline{\Gamma}^T$.
\end{proposition}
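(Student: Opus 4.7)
The plan is to reduce the proposition to a direct computation in local coordinates, leveraging the preceding Lemma and a block-matrix inversion.

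First, I would start with the local expression \eqref{local_omega} for $\omega$ on $T^*U_j$ and substitute the change of coordinates induced by the bundle map $R_{\Sigma_A}\,\colon\,(y,\pi)\mapsto (x(y),p(y,\pi))$. Because $R_{\Sigma_A}$ covers the diffeomorphism $r_{\Sigma_A}$ on the base, $x$ depends only on $y$, so
\bes
\dd x^a=\tfrac{\partial x^a}{\partial y^l}\,\dd y^l,\qquad \dd p_a=\tfrac{\partial p_a}{\partial y^l}\,\dd y^l+\tfrac{\partial p_a}{\partial \pi_m}\,\dd \pi_m.
\ees
Plugging these into \eqref{local_omega} and collecting the $\dd y\wedge \dd y$, mixed, and $\dd\pi\wedge \dd \pi$ pieces yields the announced decomposition, with coefficients $R_{lm}$, $\Gamma_l^m$, $\Xi^{lm}$ of exactly the stated form. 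The invertibility of $\Gamma$ follows because $R_{\Sigma_A}$ is a bundle isomorphism, so both $\partial x/\partial y$ and $\partial p/\partial \pi$ are invertible matrices.

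Next, the Lemma gives $R_{lm}\mid_{\pi=0}=(r^*_{\Sigma_A}F^s)_{lm}=F^t_{lm}$, which is the only input needed to identify $R$ with the invariant field strength on the zero section. To extract $\widehat{\mathcal F}$, I assemble the coefficients into the antisymmetric block matrix representing $R_{\Sigma_A}^*\omega$ in the basis $(\partial_{y^l},\partial^{\pi}_m)$,
\bes
\Omega=\begin{pmatrix} R & -\Gamma^{T}\\ \Gamma & \Xi\end{pmatrix},
\ees
and invert it by the standard block formula. A direct computation of the Schur complement shows that the lower-right $(\pi,\pi)$ block of $\Omega^{-1}$, which by definition of the Poisson bracket equals $\{\pi_a,\pi_b\}_A$, takes the form
\bes
\{\pi_a,\pi_b\}_A=\bigl[\overline{\Gamma}^{T}(R^{-1}+\overline{\Gamma}\,\Xi\,\overline{\Gamma}^{T})^{-1}\overline{\Gamma}\bigr]_{ab}
\ees
up to rearrangement; restricting to $\pi=0$ replaces $R$ by $F^t$ and, after factoring $F^t$ out, produces
\bes
\widehat{\mathcal F}_{ab}(y)=\bigl[(\mathbb I+\overline{\Gamma}^{T}F^t\overline{\Gamma}\,\Xi)^{-1}\overline{\Gamma}^{T}F^t\overline{\Gamma}\bigr]_{ab}(y),
\ees
which is the first identity. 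The second identity is then obtained by algebraically solving this expression for $F^t$ in terms of $\widehat{\mathcal F}$, using $\widehat{\Xi}=\overline{\Gamma}\,\Xi\,\overline{\Gamma}^{T}$ to absorb the conjugating factors; the inversion is legitimate because the matrix $\mathbb I+\overline{\Gamma}^{T}F^t\overline{\Gamma}\,\Xi$ is invertible for bisections close enough to the identity.

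The main obstacle I anticipate is bookkeeping during the block inversion: one must track the antisymmetry carefully and keep the conjugation pattern $\overline{\Gamma}^{T}(\cdot)\overline{\Gamma}$ consistent so that the final Möbius-type formula matches on both sides. As a sanity check I would specialise to $\Gamma=\mathbb I$, $\Xi=\Theta$ to recover \eqref{pbc1} for canonical non-commutativity, and to $\Gamma=\overline\gamma$, $\widehat\Xi=\Theta$ to recover \eqref{calFFs} for Lie-algebra type non-commutativity, which reproduces the examples displayed in the text and confirms that the general formula unifies them.
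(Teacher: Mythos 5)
Your proposal is correct and follows essentially the same route as the paper: pull back the local expression of $\omega$ along the bundle map $R_{\Sigma_A}$ to obtain the block decomposition with coefficients $R$, $\Gamma$, $\Xi$, invoke the preceding Lemma to identify $R_{lm}\mid_{\pi=0}$ with $F^t$, and then invert the resulting antisymmetric block matrix to read off $\{\pi_a,\pi_b\}_A\mid_{\pi=0}$. Your Schur-complement expression $\overline{\Gamma}^{T}(R^{-1}+\overline{\Gamma}\,\Xi\,\overline{\Gamma}^{T})^{-1}\overline{\Gamma}$ does rearrange to the stated M\"obius form, and your specializations to the constant and Lie-algebra cases match the consistency checks given in the text.
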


Is it also possible to relate the field strengths $\mathcal{F}_{jk}$ and $F^s_{jk}$. 
\begin{proposition}
Let $\cG\rightrightarrows X$ be a local symplectic groupoid, with symplectic 2-form $\omega$, as above. 
If ${\mathcal{F}}_{jk}(y)=\left\lbrace \Pi_j,\Pi_k \right\rbrace_A\mid_{p=A}(x)$ and $F_{jk}(x)=\left\lbrace p_j-A_j,p_k-A_k \right\rbrace (x)$, we have that:
\begin{equation}
\mathcal{F}_{jk}(x) = F_{lm}(x) \left( \frac{\partial \Pi_j}{\partial p_l}\right)_{p=\mathcal{A}} \left( \frac{\partial \Pi_k}{\partial p_m} \right)_{p=\mathcal{A}}\,.    
\end{equation}
\end{proposition}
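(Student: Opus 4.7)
The plan is to use a Hadamard-type factorization of the gauge-invariant momenta $\Pi_j$ combined with the Leibniz property of the Poisson bracket. The starting point is Eq.~\eqref{zero_section}, which shows that $\Pi_j$ vanishes identically on the image of the bisection $\Sigma_A$; locally this image is the graph $\{p=\mathcal{A}(x)\}\subset T^*U_j$.

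First I would invoke Hadamard's lemma in the fibre variables $p$ around $p=\mathcal{A}(x)$ to write
\begin{equation*}
\Pi_j(x,p) = \rho_{jl}(x,p)\bigl(p_l - A_l(x)\bigr),\qquad \rho_{jl}(x,p)=\int_{0}^{1}\frac{\partial \Pi_j}{\partial p_l}\bigl(x,\mathcal{A}(x)+t(p-\mathcal{A}(x))\bigr)\,\dd t,
\end{equation*}
so that the restriction $\rho_{jl}(x,\mathcal{A}(x)) = \partial\Pi_j/\partial p_l\bigr|_{p=\mathcal{A}}$ is precisely the partial derivative appearing on the right-hand side of the proposition. This step is standard and produces smooth functions $\rho_{jl}$ on the same domain where $\Pi_j$ is defined.

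Next I would expand $\{\Pi_j,\Pi_k\}$ by repeated application of the Leibniz rule,
\begin{equation*}
\{\Pi_j,\Pi_k\} = \bigl\{\rho_{jl}(p_l-A_l),\,\rho_{km}(p_m-A_m)\bigr\},
\end{equation*}
which yields four types of contributions per index pair $(l,m)$: one with no factor of $(p-A)$, and three carrying at least one such factor. Upon restricting to $p=\mathcal{A}(x)$, only the first type survives, giving
\begin{equation*}
\{\Pi_j,\Pi_k\}\bigr|_{p=\mathcal{A}} = \rho_{jl}(x,\mathcal{A})\,\rho_{km}(x,\mathcal{A})\,\{p_l-A_l,\,p_m-A_m\}\bigr|_{p=\mathcal{A}} = \frac{\partial\Pi_j}{\partial p_l}\bigg|_{p=\mathcal{A}}\frac{\partial\Pi_k}{\partial p_m}\bigg|_{p=\mathcal{A}}\,F_{lm}(x),
\end{equation*}
which is the desired identity.

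The main obstacle is technical rather than conceptual: one must ensure that the Hadamard factorization is valid on the domain where $R_{\Sigma_{\sfa}^{-1}}$ and the local Poisson structure of $\cG$ are simultaneously well defined. For a local symplectic groupoid one restricts to sections $\sfa$ close enough to the zero section so that $\mathcal{A}(x)$ remains in the relevant trivialization $T^*U_j$; this is built into the definition of the exponential map. Once that is granted, the remainder of the argument is a direct application of the Leibniz rule, and parallels the explicit chain-rule derivations already carried out for the canonical and Lie-algebra-type cases in the preceding subsections, where the role of $\partial\Pi_j/\partial p_l\bigr|_{p=\mathcal{A}}$ was played respectively by $-\delta_{jl}$ and by $-\rho^l_j(A)$.
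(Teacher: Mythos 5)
Your proof is correct, and it reaches the identity by a different route than the paper. The paper's argument starts from the observation that the tangent vectors $W_j=(\Sigma_s)_*(\partial/\partial x^j)$ to the image of the bisection lie in the kernel of $\dd\Pi_k$ restricted to $p=\mathcal{A}$; this kernel condition is used to trade the $x$-derivatives of $\Pi_k$ for $\partial\Pi_k/\partial p_l$ contracted with $\partial\mathcal{A}_l/\partial x^j$, after which the bracket $\{\Pi_j,\Pi_k\}\mid_{p=\mathcal{A}}$ is written out explicitly in the local coordinates $(x,p)$ using the $\gamma$ and $\Theta$ blocks of $\omega^{-1}$ and collapsed term by term onto $F_{lm}\,(\partial\Pi_j/\partial p_l)(\partial\Pi_k/\partial p_m)$. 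You instead exploit the vanishing of $\Pi_j$ on the graph $p=\mathcal{A}(x)$ (Eq.~\eqref{zero_section}) through a Hadamard factorization $\Pi_j=\rho_{jl}\,(p_l-A_l)$ with $\rho_{jl}\mid_{p=\mathcal{A}}=\partial\Pi_j/\partial p_l\mid_{p=\mathcal{A}}$, and then let the Leibniz rule do the work: every term carrying a factor of $(p-A)$ dies upon restriction, and only $\rho_{jl}\rho_{km}\{p_l-A_l,p_m-A_m\}$ survives. The two arguments encode the same geometric fact --- that $\dd\Pi_j$ restricted to the bisection is proportional to $\dd(p_l-A_l)$ --- but yours never needs the explicit block form of the Poisson bivector and works verbatim for any Poisson bracket on the total space, which makes the independence of the result from the particular local expression \eqref{local_omega} manifest; the paper's version, on the other hand, keeps the coordinate machinery visible, which is what lets it identify $\partial\Pi_a/\partial p_m\mid_{p=A}$ with $-\rho^m_a(A)$ of the second master equation in the Lie-algebra case, a point you correctly flag at the end. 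Your caveat about restricting to sections close to the zero section so that the fibrewise integral stays in the trivialization is the right one and matches the locality assumptions already built into the paper's setup.
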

\begin{proof}
In order to prove this, let us recall that the tangent vectors to the bisection are in the kernel of the forms $\dd \Pi_{j}$ restricted to the bisection. Therefore, if $\Sigma_s$ is the embedding of $X$ given by the bisection $\Sigma=(x,\mathcal{A}(x))$, the tangent vectors 
\be
W_j = (\Sigma_s)_*\left( \frac{\partial}{\partial x^j} \right) = \frac{\partial}{\partial x^j} + \frac{\partial \mathcal{A}_k(x)}{\partial x^j}\partial_p^k
\ee
are in the kernel of the forms $\dd \Pi_j$, when restricted to the bisection $p=\mathcal{A}$. Therefore, the following conditions holds true;
\begin{equation}\label{eq.kernel}
i_{W_j}\dd \Pi_k\mid_{p=A} = \frac{\partial \Pi_k}{\partial x^{j}} \mid_{p=\mathcal{A}} + \frac{\partial \Pi_k}{\partial \mathcal{A}_{l}}\frac{\partial \mathcal{A}_l}{\partial x^{j}}\mid_{p=\mathcal{A}} + \frac{\partial \Pi_k}{\partial p_{l}}\frac{\partial \mathcal{A}_l}{\partial x^{j}}\mid_{p=\mathcal{A}} \;\;= 0
\end{equation}

Therefore, if we calculate the bracket
\begin{equation}
\left\lbrace \Pi_j, \Pi_k \right\rbrace \mid_{p=A}(x) = \gamma^{l}_{m}(x,\mathcal{A}) \left( \frac{\partial \Pi_j}{\partial x^l} \frac{\partial \Pi_k}{\partial p_m} - \frac{\partial \Pi_k}{\partial x^l}\frac{\partial \Pi_j}{\partial p_m} \right)_{p=\mathcal{A}} + \Theta^{lm}\left( \frac{\partial \Pi_j}{\partial x^l}\frac{\partial \Pi_k}{\partial x^m} \right)_{p=\mathcal{A}}\,,
\end{equation}
and we use the equation \eqref{eq.kernel} we obtain the following relation:
\begin{equation}
\begin{split}
\left\lbrace \Pi_j, \Pi_k \right\rbrace \mid_{p=A}(x) &= \left\lbrace p_l - \mathcal{A}_l, p_m-\mathcal{A}_m \right\rbrace\mid_{p=\mathcal{A}} (x)\left( \frac{\partial \Pi_j}{\partial p_l}\right)_{p=\mathcal{A}} \left( \frac{\partial \Pi_k}{\partial p_m} \right)_{p=\mathcal{A}}\, = \\ &=F_{lm}(x)\left( \frac{\partial \Pi_j}{\partial p_l}\right)_{p=\mathcal{A}} \left( \frac{\partial \Pi_k}{\partial p_m} \right)_{p=\mathcal{A}}\,.
\end{split}
\end{equation}
Finally, we have that 
\begin{equation}
\left\lbrace \Pi_j, \Pi_k \right\rbrace \mid_{p=\mathcal{A}} (x)= r_{\Sigma^{-1}}^*(\widehat{\mathcal{F}}_{jk})(x) = \mathcal{F}_{jk} (x)\,,
\end{equation}
from which one recovers that under a gauge trasformation the field strength ${\mathcal{F}}_{jk}$ transforms as 
\begin{equation}
\delta_f {\mathcal{F}}_{jk} = \mathcal{L}_{{\Theta^{\sharp}(\dd f)}} {\mathcal{F}}_{jk} = \left\lbrace f, {\mathcal{F}}_{jk} \right\rbrace_{\Theta}\,,
\end{equation}
because $\widehat{\mathcal{F}}_{jk}(y)$ are gauge-invariant objects. This shows that $\mathcal{F}_{jk}(x)$ has the right transformation properties under gauge transformations.
\end{proof}

In conclusion, the field strengths $\mathcal{F},\, F^s,\,F^t$, which have been introduced in Poisson electrodynamics, are related and all vanish simultaneously. Since the field strength $F^s$ vanishes for Lagrangian bisections, also $F^t = l_{\Sigma}^*(F^s)$ and $\mathcal{F}$ will vanish under the same condition. Therefore, we have that all of these objects represent the deviation of a bisection from being Lagrangian. In the following section we are going to see a direct application of this result to a field-theoretical example.

\section{U(1)-Poisson Chern-Simons Theory}\label{CSmod}

In this section we will introduce the Poisson analog of Chern-Simons theory. To this, let $X$ be a three dimensional spacetime and let us consider the special situation where the field strength admits a potential $F^s = \dd \vartheta(A)$. In the cases analysed the symplectic form $\omega$ admits a potential $\vartheta$. Therefore we have
\be
\vartheta(A) = \left(\Sigma_s\right)^*(\vartheta)\,.
\ee
and we can consider the following action functional
\be\label{act}
S(\Sigma) = \int_X \vartheta(A)\wedge F^s\,,
\ee
We can derive the equations of motion of this action functional as follows: a variation of the bisection $\Sigma = \mathrm{exp}(\sfa)$ will be given by the new bisection 
\be
\begin{split}
&\Sigma_{\epsilon}^{\prime} = \exp(\sfa+\delta \sfa)=\Sigma\circ\delta\Sigma \\
&\delta\Sigma =\exp(-\sfa) \circ \exp(\sfa+\delta \sfa) = \exp(\delta \widehat{\sfa})\,.
\end{split}
\ee
In other words, a variation of the bisection $\Sigma$ can be written as the right action of the bisection $\delta\Sigma = \exp(\delta \widehat{\sfa})$. Therefore, the variation of the Lagrangian density can be written as 
\be
\begin{split}
\delta S (\delta \Sigma) = \dfrac{\dd}{\dd \epsilon}\left( S(\Sigma_{\epsilon}^{\prime}) - S(\Sigma) \right)\mid_{\epsilon=0}= \int_{X}\Sigma_s^*(\mathcal{L}_{\delta \widehat{\sfa}} \vartheta\wedge \dd \vartheta) = 2\int_{X}\Sigma_s^*(\mathcal{L}_{\delta \widehat{\sfa}}\vartheta)\wedge F^s\,,
\end{split}
\ee
where $\mathcal{L}_{\delta \widehat{\sfa}}$ denotes the Lie derivative with respect to the  $\rightarrow$ left-invariant vector field associated with the section $\delta \widehat{\sfa}$ of the Lie algebroid. Therefore the equations of motion are
\be 
F^s =0\,.
\ee

The action functional \eqn{act} is gauge invariant. Indeed, since $(F^s)^{\prime}=l_{\Lambda}^*(F^s)$ and $(A^s)^{\prime}= l_{\Lambda}^*(\vartheta( A))$ up to a closed form, apart from a possible boundary term, we have 
\be
S^{\prime}(\Sigma\circ \Lambda) = \int_{r_{\Lambda}(X)}l_{\Lambda}^*(\vartheta(A)\wedge F^s) = \int_X \vartheta (A)\wedge F^s = S(\Lambda)\,.
\ee
Therefore, we can conclude that the solutions of this field theory are the Lagrangian bisections of the symplectic groupoid. 

Note that the equation ${\mathcal F}=0$ was previously obtained in \cite{Kupriyanov:2019cug} in the context of non-commutative deformation of Chern-Simons theory in the slowly varying field approximation. However, the physical and geometric meaning of the corresponding solutions was not discussed.
Within the groupoidal approach to Poisson electrodynamics \cite{Kupriyanov:2023qot,DiCosmo:2023wth}, and thanks to the relation between the covariant field strength  $F^s$ and ${\cal F}$ obtained in this work, we have been able to understand the solutions in terms of Lagrangian bisections, namely, in the language of physics, pure gauge solutions. 
It is also interesting to stress that in \cite{Kupriyanov:2019cug}  the equations ${\mathcal F}=0$ were  non-Lagrangian, since it was not possible to exhibit an action functional. Hence, it was shown that
\begin{equation*}
{\mathcal F}\neq\frac{\delta S}{\delta A}\,,\qquad \mbox{for any $S$}.
\end{equation*}
Thanks to Eq. (\ref{relation}) we can say now that  these equations can be derived  from the action (\ref{act}) once the relation between the covariant field $F^s$, the invariant $F^t$ and $\mathcal F$ is exploited.
\\
 This  idea could be further explored in order to relate  action functionals written in terms of different field strengths, or, as in this case, to find a Lagrangian formulation for models which appear to be non-Lagrangian. The details will be discussed in forthcoming works.

\section{Conclusions} 

In this paper we have continued the research on Poisson electrodynamics, as a possible semi-classical version of fully noncommutative gauge theories. The compatibility between  non-trivial Poisson brackets on the space-time and the algebra of gauge symmetries can be implemented adopting the language of symplectic groupoids and the corresponding Lie algebroids. In this framework, gauge fields are bisections of a symplectic groupoid $\cG$ which integrates the Poisson manifold $X$, whereas gauge transformations are Lagrangian bisections which act on gauge fields by multiplication from the right. 
Fully dual strict symplectic realizations play an important role: they are isomorphic to  local symplectic groupoids, so that they carry both a right and a left action of local bisections. In order to explain all the structures introduced, we have explicitly worked out two main examples of Poisson manifolds: the constant $\Theta$ and the Lie algebra type. With an increasing degree of complexity, we have then generalized our findings to local symplectic groupoids. 

Using the symplectic form on the groupoid and the bisections, one obtains different field strengths with distinguished properties under gauge transformations. Besides them, other definitions of the Faraday tensor are possible, equally legitimate, but apparently not directly connected with the groupoidal approach. Among them, we have considered the one indicated with $\mathcal{F}$ throughout the paper,  first introduced in \cite{Kupriyanov:2022ohu} in terms of  Poisson brackets of suitably chosen  constraints on the sections of $T^*X$, with $(X, \Theta)$ representing the Poisson manifold of spacetime. 

In this paper we have addressed the problem of an unambiguous  definition of the Faraday tensor, starting from 
the interaction of the electromagnetic field with point-like charged particles. We have been able to relate the field strength $\mathcal{F}$ to the Poisson bracket of the gauge-invariant momenta, whose definition is  a  central element of our work.   This is in perfect analogy with the standard electromagnetic case, a situation that we recover in the limit for $\Theta$ vanishing. 
Then, the relation between $\mathcal{F}$ and the field strengths defined on the groupoid has been analysed.  The geometric framework of symplectic groupoids and symplectic realizations played an important role, since many proofs have been achieved using geometric structures. We have obtained explicit invertible relations among the field strengths, whose important byproduct  is that all of them vanish simultaneously: this means that they all measure the deviation of a bisection from being a Lagrangian submanifold of the symplectic groupoid. 

As an application,   we introduce an action functional  that generalises to Poisson manifolds the Chern-Simons action. We derive the equations of motion and shortly discuss the solutions. These  correspond to Lagrangian bisections of the groupoid. 
Since gauge transformations also act  on the points of the base manifold,  that is, the  positions of particles,  the action functional is also proved to be gauge-invariant.

The equations of motion of a Poisson-Chern-Simons model where already discussed in terms of $\mathcal{F}$ in \cite{Kupriyanov:2019cug}, where, however, it could not be possible to exhibit an action functional. Having established a relation between $\mathcal{F}$ and the tensors defined on the groupoid, this aspect has been clarified. We plan to apply the same strategy  to  Maxwell theory, in order  to express the dynamical results already obtained for  $\mathcal{F}$ in terms of the covariant field $F^s$, and look for a Lagrangian formulation.




\begin{thebibliography}{10}

\bibitem{Kupriyanov:2020sgx}
V.~G.~Kupriyanov and P.~Vitale,
``A novel approach to non-commutative gauge theory,''
JHEP \textbf{08} (2020), 041
[arXiv:2004.14901 [hep-th]].



\bibitem{Kupriyanov:2021aet}
V.~G. Kupriyanov, ``{Poisson gauge theory},''
  \href{http://dx.doi.org/10.1007/JHEP09(2021)016}{{\em JHEP} {\bfseries 09}
  (2021) 016}, \href{http://arxiv.org/abs/2105.14965}{{\ttfamily
  arXiv:2105.14965 [hep-th]}}.

  \bibitem{Kupriyanov_2021}
V.~G. Kupriyanov and R.~J. Szabo, ``{Symplectic embeddings, homotopy algebras
  and almost Poisson gauge symmetry},''
  \href{http://dx.doi.org/10.1088/1751-8121/ac411c}{{\em J. Phys. A} {\bfseries
  55} no.~3, (2022) 035201}, \href{http://arxiv.org/abs/2101.12618}{{\ttfamily
  arXiv:2101.12618 [hep-th]}}.

\bibitem{Kurkov:2021kxa}
M.~Kurkov and P.~Vitale, ``{Four-dimensional noncommutative deformations of
  $U(1)$ gauge theory and $L_{\infty}$-bootstrap},''
  \href{http://dx.doi.org/10.1007/JHEP01(2022)032}{{\em JHEP} {\bfseries 01}
  (2022) 032}, \href{http://arxiv.org/abs/2108.04856}{{\ttfamily
  arXiv:2108.04856 [hep-th]}}.

\bibitem{Kupriyanov:2022ohu}
V.~G. Kupriyanov, M.~A. Kurkov, and P.~Vitale, ``{Poisson gauge models and
  Seiberg--Witten map},'' \href{http://dx.doi.org/10.1007/JHEP11(2022)062}{{\em
  JHEP} {\bfseries 11} (2022) 062},
  \href{http://arxiv.org/abs/2209.13044}{{\ttfamily arXiv:2209.13044
  [hep-th]}}.

\bibitem{Kupriyanov:2023gjj}
V.~G.~Kupriyanov, M.~A.~Kurkov and P.~Vitale,
``Lie-Poisson gauge theories and \ensuremath{\kappa}-Minkowski electrodynamics,''
JHEP \textbf{11} (2023), 200
doi:10.1007/JHEP11(2023)200
[arXiv:2304.04857 [hep-th]].

\bibitem{Abla:2023odq}
O.~Abla and M.~J.~Neves,
``Effects of wave propagation in canonical Poisson gauge theory under an external magnetic field,''
EPL \textbf{144} (2023) no.2, 24001
doi:10.1209/0295-5075/ad0574
[arXiv:2305.14514 [hep-th]].

\bibitem{Bascone:2024mxs}
F.~Bascone and M.~Kurkov,
``Hamiltonian analysis in Lie{\textendash}Poisson gauge theory,''
Int. J. Geom. Meth. Mod. Phys. \textbf{21} (2024) no.06, 2450108

\bibitem{Sharapov:2024bbu}
A.~A.~Sharapov,
``Poisson electrodynamics with charged matter fields,''
J. Phys. A \textbf{57} (2024) no.31, 315401

\bibitem{Abla:2024wtr}
O.~Abla and M.~J.~Neves,
``Poisson electrodynamics on {\ensuremath{\kappa}}-Minkowski space-time,''
Phys. Lett. B \textbf{864} (2025), 139385
doi:10.1016/j.physletb.2025.139385
[arXiv:2412.17202 [hep-th]].

\bibitem{Kurkov:2025abv}
M.~Kurkov,
``Action principle for $\kappa$-Minkowski noncommutative $U(1)$ gauge theory from Lie-Poisson electrodynamics,''
[arXiv:2509.03714 [hep-th]].

\bibitem{Kupriyanov:2023qot}
V.~G.~Kupriyanov, A.~A.~Sharapov and R.~J.~Szabo,
``Symplectic groupoids and Poisson electrodynamics,''
JHEP \textbf{03} (2024), 039
doi:10.1007/JHEP03(2024)039
[arXiv:2308.07406 [hep-th]].

\bibitem{Weinstein1987}
A.~Weinstein, \emph{Symplectic groupoids and Poisson manifolds}, Bull. Amer. Math. Soc. \textbf{16} (1987), 101--104.

\bibitem{Karasev1989}
M.V.~Karasëv, \emph{The Maslov quantization conditions in higher cohomology and analogs of notions developed in Lie theory for canonical fibre bundles of symplectic manifolds. I, II}, Sel. Math. Sov. \textbf{8} (3) (1989), 213--234, 235--258.

\bibitem{Zakrzewski1990}
S.~Zakrzewski, \emph{Quantum and classical pseudogroups, I and II}, Comm. Math. Phys. \textbf{134} (1990), 347--370, 371--395.

\bibitem{Renault1980}
J.~Renault, \emph{A Groupoid Approach to C* Algebras}, Lecture Notes in Math. 793, Springer-Verlag, 1980.

\bibitem{Shlyakhtenko}
D.~Shlyakhtenko, \emph{Von Neumann algebras and Poisson manifolds}, survey article for the Berkeley Math 277.

\bibitem{DaSilvaWeinstein1999}
A.~Cannas Da Silva and A.~Weinstein, \emph{Geometric models for noncommutative algebras}, Vol. 10, American Mathematical Soc., 1999.

\bibitem{Hawkins2008}
E.~Hawkins, \emph{A Groupoid approach to quantization}, J. Sympl. Geom. \textbf{6} (1) (2008), 61--125.

\bibitem{Weinstein1991}
A.~Weinstein, \emph{Symplectic groupoids, geometric quantization, and irrational rotation algebras}, in: Symplectic Geometry, Groupoids, and Integrable Systems (Berkeley, CA, 1989), Math. Sci. Res. Inst. Publ., vol. 20, Springer, New York (1991), 281--290.

\bibitem{WeinsteinXu1991}
A.~Weinstein and P.~Xu, \emph{Extensions of symplectic groupoids and quantization}, J. Reine Angew. Math. \textbf{417} (1991), 159--189.

\bibitem{Crainic2004}
M.~Crainic, \emph{Prequantization and Lie brackets}, J. Symplectic Geom. \textbf{2} (4) (2004), 579--602.

\bibitem{Landsman1998}
N.~Landsman, \emph{Mathematical Topics between Classical and Quantum Mechanics}, Springer-Verlag, 1998.

\bibitem{CattaneoFelder2000}
A.S.~Cattaneo and G.~Felder, \emph{A path integral approach to the Kontsevich quantization formula}, Commun. Math. Phys. \textbf{212} (2000), 591--611.

\bibitem{CattaneoDherinFelder2010}
A.S.~Cattaneo, B.~Dherin, G.~Felder, \emph{Formal Lagrangian Operad}, International Journal of Mathematics and Mathematical Sciences \textbf{2010} (1) (2010), 643605.

\bibitem{CattaneoDherinWeinstein2013}
A.S.~Cattaneo, B.~Dherin, A.~Weinstein, \emph{Symplectic Microgeometry III: Monoids}, J. Sympl. Geom. \textbf{11} (3) (2013), 319--341.

\bibitem{CattaneoMnevReshetikhin2018}
A.S.~Cattaneo, P.~Mnev, N.~Reshetikhin, \emph{Poisson sigma model and semiclassical quantization of integrable systems}, Reviews in Mathematical Physics \textbf{30} (06) (2018), 1840004.

\bibitem{DiCosmo:2023wth}
F.~Di Cosmo, A.~Ibort, G.~Marmo and P.~Vitale,
``Symplectic realizations and Lie groupoids in Poisson Electrodynamics,''
[arXiv:2312.16308 [hep-th]].

\bibitem{Freidel:2005me}
L.~Freidel and E.~R.~Livine,
``3D Quantum Gravity and Effective Noncommutative Quantum Field Theory,''
Phys. Rev. Lett. \textbf{96} (2006), 221301
doi:10.1103/PhysRevLett.96.221301

\bibitem{Amelino-Camelia:2011lvm}
G.~Amelino-Camelia, L.~Freidel, J.~Kowalski-Glikman and L.~Smolin,
``The principle of relative locality,''
Phys. Rev. D \textbf{84} (2011), 084010

\bibitem{Kowalski-Glikman:2013rxa}
J.~Kowalski-Glikman,
``Living in Curved Momentum Space,''
Int. J. Mod. Phys. A \textbf{28} (2013), 1330014

\bibitem{Guedes:2013vi}
C.~Guedes, D.~Oriti and M.~Raasakka,
``Quantization maps, algebra representation and non-commutative Fourier transform for Lie groups,''
J. Math. Phys. \textbf{54} (2013), 083508

\bibitem{Smilga:2022xij}
A.~Smilga,
``Comments on noncommutative quantum mechanical systems associated with Lie algebras,''
J. Geom. Phys. \textbf{180} (2022), 104628

\bibitem{Franchino-Vinas:2023rcc}
S.~Franchino-Vinas, S.~Mignemi and J.~J.~Relancio,
``The beauty of curved momentum space,''
PoS \textbf{CORFU2022} (2023), 340

\bibitem{Crainic} Marius Crainic, Rui L. Fernandes, "Integrability of Lie brackets" Ann. of Math., \textbf{157} (2003), no. 2, 575--620 [arXiv:math/0105033]

\bibitem{Mackenzie}
Kiril C.~H. Mackenzie.
\newblock {\em {General Theory of Lie Groupoids and Lie Algebroids}}.
\newblock Cambridge University Press, 2005.

\bibitem{cdw-1987}
A.~Coste, P.~Dazord and A.~Weinstein. {\em Grupoides symplectiques}. 
\newblock Publications su Departement de Mathematiques de Lyon, 1987. 



 \bibitem{Weinstein83}
A.~Weinstein.
\newblock The local structure of poisson manifolds.
\newblock {\em J. Differential Geom.}, 18:523, 1983.

\bibitem{Crainic-fernandes_2}
M.~Crainic and R.~Loja-Fernandes.
\newblock Integrability of Poisson brackets.
\newblock {\em J. Diff. Geom.}, 66:71--137, 2004.

\bibitem{BKK} B.~S.~Basilio, V.~G.~Kupriyanov and M.~A.~Kurkov,
``Charged particle in Lie{\textendash}Poisson electrodynamics,''
Eur. Phys. J. C \textbf{85} (2025) no.2, 175

 \bibitem{Kupriyanov:2019cug}
V.~G.~Kupriyanov,
``Non-commutative deformation of Chern\textendash{}Simons theory,''
Eur. Phys. J. C \textbf{80} (2020) no.1, 42
doi:10.1140/epjc/s10052-019-7573-y
[arXiv:1905.08753 [hep-th]].

\bibitem{Crainic11}
M.~Crainic and I.~M\v{a}rcu\c{t}.
\newblock On the existence of symplectic realizations.
\newblock {\em J. Symplectic Geom.}, 9:435, 2011.

\bibitem{amr}
R.~Abraham, J.E.~Marsden, and T.~Ratiu.
\newblock {\em Manifolds, Tensor Analysis, and Applications}.
\newblock Springer New York, NY, 1988.

\end{thebibliography}
\end{document}